\documentclass[journal]{IEEEtran}
\usepackage{amsmath,amssymb,amsthm}
\usepackage{graphicx}
\usepackage{booktabs}
\usepackage{multirow}
\usepackage{algorithm}
\usepackage{algpseudocode}
\usepackage{cite}
\usepackage{url}
\usepackage{balance}
\usepackage[hidelinks]{hyperref}
\usepackage{mathtools}
\usepackage{array}
\usepackage{tabularx}
\usepackage{makecell}
\newtheorem{theorem}{Theorem}
\newtheorem{lemma}{Lemma}
\newtheorem{proposition}{Proposition}
\newtheorem{corollary}{Corollary}
\newtheorem{remark}{Remark}

\DeclareMathOperator{\PPP}{PPP}
\DeclareMathOperator{\argmin}{arg\,min}
\newcommand{\R}{\mathbb{R}}
\newcommand{\E}{\mathbb{E}}
\newcommand{\Prb}{\mathbb{P}}

\newcommand{\pos}[1]{\left[#1\right]_+}

\title{Spatial Prefix Caching for Wireless Edge LLM Inference:\\A Stochastic-Geometry and Queueing Framework}

\author{Le Yang, Zhouyong Liu%
\thanks{The authors are with Southeast University.}%
}

\begin{document}
\maketitle

\begin{abstract}
Prefix caching reuses the key--value (KV) states of shared prompt prefixes and can substantially reduce the time to first token (TTFT) of large language model (LLM) inference. In a wireless edge network, however, prefix states are distributed across geographically separated GPU nodes. A nearby node offers a short radio path but may provide little reuse, whereas a more distant node may cache a longer matching prefix but incur additional communication and queueing delay. Moreover, persistent prefixes and active-request KV states compete for the same GPU memory, so aggressive caching can reduce inference concurrency and create queueing hotspots. This paper develops a stochastic-geometry and queueing framework for this spatial communication--caching--computation tradeoff. We represent the prompt workload by a prefix forest and define an ancestor-closed cache profile that may contain multiple reusable prefixes. Edge GPU nodes form a Poisson point process and are independently marked by cache profile, yielding analytically tractable spatial tiers. We derive the profile-association probability, conditional serving-distance distribution, token-level computation-offloading ratio, and TTFT coverage probability under a load-aware association policy. A fixed-point formulation captures the coupling between spatial association and multi-server GPU queues, while an outer optimization selects the cache-profile distribution subject to static-memory and stability constraints. Analytical and Monte Carlo results agree closely. The results show that the latency-optimal node need not be the nearest node, that TTFT can be non-monotonic in cached-prefix depth because of GPU-memory coupling, and that queue-unaware prefix routing can collapse under skewed cache affinity. The framework converts classical binary content caching into a spatial theory of partial computation-state reuse and provides design guidelines for edge-node density, GPU-memory partitioning, and cache-aware request routing.
\end{abstract}

\begin{IEEEkeywords}
Edge large language models, KV cache, prefix caching, stochastic geometry, queueing, GPU memory, time to first token.
\end{IEEEkeywords}

\section{Introduction}
\subsection{Motivation and Research Gap}
\IEEEPARstart{L}{arge} language models (LLMs) are increasingly used in interactive assistants, retrieval-augmented generation, industrial agents, and privacy-sensitive applications. Moving inference from a remote cloud toward wireless edge nodes can shorten network paths and reduce dependence on centralized infrastructure. Yet an edge GPU has limited memory and computation capacity, while long prompts and high concurrency make inference expensive.

Autoregressive inference contains a prompt-prefill phase followed by token-by-token decoding. During prefill, the model processes the input sequence and constructs the corresponding key--value (KV) states. Many production prompts share system instructions, templates, retrieved documents, tool descriptions, or multi-turn history. Recomputing these shared portions wastes GPU cycles. PagedAttention and vLLM improve KV-memory utilization and sharing~\cite{kwon2023pagedattention}; Prompt Cache explicitly reuses modular prompt states~\cite{gim2024promptcache}; RadixAttention in SGLang automatically organizes shared prefixes~\cite{zheng2024sglang}; and CachedAttention preserves conversation states across memory and storage tiers~\cite{gao2024cachedattention}. These systems establish prefix reuse as a practical serving primitive.

Beyond raw model execution, edge deployment changes the optimization objective. A cloud cluster is normally connected by a high-capacity fabric and is operated as a single resource pool. A wireless edge network is geographically fragmented: the node with the best radio path need not hold the most useful prefix, and the node with the most useful prefix may be heavily loaded. Consequently, a scheduler must decide not only \emph{whether} a prefix is reusable, but also \emph{where} that reusable computation state should be placed and \emph{when} it is worth traversing a longer network path to reach it.

The issue becomes more pronounced for agentic and retrieval-augmented workloads. A large population of requests may share system instructions, tool specifications, retrieval templates, or a popular document prefix, while their final user-specific suffixes remain unique. Such workloads naturally create a prefix forest rather than a flat catalogue of independent objects. The value of a cache entry is therefore determined jointly by its depth, the number of descendant requests that can reuse it, its persistence lifetime, and the runtime memory that it displaces.

A network-level theory is useful for three practical planning tasks. First, an operator must decide the edge-GPU density needed to satisfy a TTFT service-level objective. Second, a serving platform must divide GPU memory between persistent reusable states and active-request states. Third, a router must balance cache affinity against spatial communication cost and queue congestion. These decisions are tightly coupled and cannot be inferred from server-level cache-hit ratio alone.

The next step is distributed prefix caching. Preble jointly considers cache affinity and cluster load~\cite{srivatsa2025preble}; Mooncake treats KV states as a first-class disaggregated resource~\cite{qin2024mooncake}; ShadowServe shows that network transfer can bottleneck remote prefix reuse~\cite{xiang2025shadowserve}; and recent distributed prompt caching directly targets cooperative edge devices connected by wireless links~\cite{matsutani2026distributed}. These works motivate a fundamental wireless-edge question:

\emph{Which edge GPU should serve a request when candidate nodes differ simultaneously in radio distance, reusable-prefix depth, queueing delay, and memory-limited concurrency?}

This question is not captured by classical wireless caching. Traditional stochastic-geometry models treat cached objects as independent files, and a cache hit is approximately binary~\cite{chae2016caching,chen2017probabilistic}. KV prefixes instead have three distinctive properties.

First, prefixes form a tree or forest: a long reusable prefix contains shorter ancestors. Second, a partial hit still saves computation, so the benefit is token-level rather than binary. Third, persistent prefix states and active-request KV states occupy the same GPU memory. Caching more prefixes can therefore save prefill work while reducing runtime concurrency and increasing queueing delay.

This paper develops a spatial analytical framework for these coupled effects. Fig.~\ref{fig:concept} illustrates the central tradeoff. The typical user sees multiple candidate edge GPUs: a nearby node without reusable state, a more distant shallow-prefix node, and a still more distant deep-prefix node. The latency-optimal choice depends on the sum of radio delay, residual prefill, and queueing.

\begin{figure}[t]
\centering
\includegraphics[width=\columnwidth]{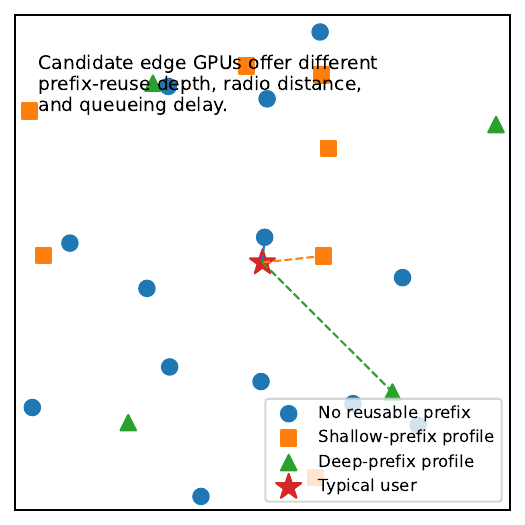}
\caption{Spatial prefix caching. Independently marked edge-GPU nodes offer different reusable-prefix depths, communication distances, and queueing delays.}
\label{fig:concept}
\end{figure}

\subsection{Contributions}
The main contributions are summarized as follows.

\begin{itemize}
\item \textbf{Prefix-forest cache profiles:} We model a reusable prompt workload as a prefix forest. Each edge node stores an ancestor-closed cache profile, allowing one spatial mark to represent multiple related prefixes and generalizing the conventional ``one file per mark'' abstraction.

\item \textbf{Load-aware spatial association:} We propose a communication--caching--computation-aware policy that minimizes predicted TTFT. Independent marking decomposes the edge network into profile-specific Poisson tiers. We derive integral expressions for profile-association probabilities and conditional serving-distance distributions.

\item \textbf{Queue-distribution-aware TTFT coverage:} We couple the spatial model to an analytically calibrated $M/M/c$ GPU queue. Association uses predicted mean waiting time, while TTFT coverage uses the full Erlang-C waiting-time distribution. A damped fixed point captures the feedback between association and queue load.

\item \textbf{GPU-memory coupling and placement optimization:} We explicitly divide GPU memory between persistent reusable prefixes and active requests. The cache-profile distribution is optimized under average static-memory and queue-stability constraints.

\item \textbf{Validated structural insights:} Analytical association probabilities match Monte Carlo simulation. Numerical results reveal a load-dependent optimal prefix depth, severe hotspot formation under queue-unaware routing, and a density-planning benefit that cannot be inferred from cache-hit probability alone.
\end{itemize}

To the best of our knowledge, prior distributed prefix-caching systems do not provide a stochastic-geometry framework that jointly analyzes spatial prefix availability, token-level prefill savings, queueing, and GPU-memory competition.

The remainder of the paper follows the same chain of reasoning. Section~II relates computation-state reuse to wireless caching and distributed LLM serving. Section~III builds the spatial, prefix, resource, and queueing model. Section~IV first characterizes association, then closes the association--load loop, and finally derives TTFT reliability and structural design implications. Section~V converts these results into trace-driven cache-profile construction and spatial placement. Section~VI validates the analysis and studies the resulting design tradeoffs. Section~VII maps the framework to a deployable control architecture and discusses extensions beyond the baseline model. The appendices provide complete proofs and reproducibility details.

\section{Background and Related Work}
The proposed framework sits at the intersection of two mature research lines: stochastic-geometry analysis of wireless caching and system-level optimization of KV-state reuse. The key gap is not a missing server-side cache primitive, but the absence of a network-level model that explains how reusable computation, spatial access, GPU memory, and queue load interact.

\subsection{Reusable Computation States and Wireless Edge Caching}
For a transformer with $N_{\rm L}$ layers, $N_{\rm H}$ KV heads, head dimension $d_{\rm H}$, numerical precision $b_{\rm e}$ bytes, and sequence length $L$, an uncompressed per-request KV footprint is approximately
\begin{equation}
M_{\rm KV}(L)=2N_{\rm L}N_{\rm H}d_{\rm H}b_{\rm e}L,
\label{eq:kvmemorybackground}
\end{equation}
where the factor two accounts for keys and values. The footprint is linear in sequence length, while the prefill computation generally grows at least linearly and includes an attention-dependent component. A cached exact prefix allows the server to skip the corresponding portion of prefill and continue from the stored state without changing model output.

Prefix reuse is different from conventional output caching. The cached object is not a completed answer but a model-internal computation state. It is useful only for requests whose token sequence follows the same prefix, and its benefit depends on the number of reusable tokens. Exact-prefix reuse is therefore naturally represented by a radix tree or prefix forest. Fig.~\ref{fig:forest} illustrates three cache profiles: broad shallow replication, a deep tool-agent branch, and a deep retrieval branch.

\begin{figure}[t]
\centering
\includegraphics[width=\columnwidth]{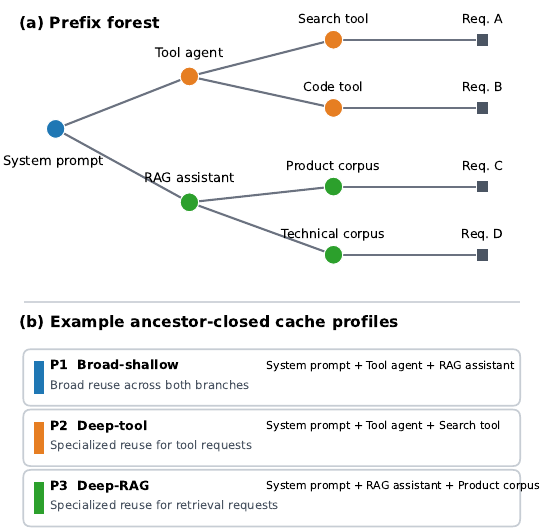}
\caption{A prefix forest and three ancestor-closed cache profiles. The upper panel shows the shared prompt structure; the lower panel lists representative broad-shallow and branch-specialized profiles. For each request, the useful hit is the deepest cached ancestor on its path.}
\label{fig:forest}
\end{figure}

The preceding mechanism turns repeated prompt processing into a cacheable resource. Wireless caching theory provides a natural starting point for asking where such a resource should be placed, but its classical object model must be revisited.

PPPs provide tractable approximations for irregular cellular deployments and permit closed-form or single-integral performance expressions through independent thinning and the probability-generating functional~\cite{andrews2011tractable,haenggi2012stochastic}. In cache-enabled networks, probabilistic placement maps nodes storing a given object to a file-specific thinned PPP. This abstraction has been used to analyze cache-hit probability, serving distance, delivery success, and throughput. Chae and Choi optimize probabilistic placement while accounting for channel-selection diversity and interference~\cite{chae2016caching}. Chen \emph{et al.} distinguish cache-hit-optimal and throughput-optimal placement in D2D networks~\cite{chen2017probabilistic}.

The classical model is powerful but relies on assumptions that are unsuitable for computation-state caching: objects are usually independent; a hit is binary; and storage does not directly reduce the server's concurrent service capacity. Our cache profile preserves the tractability of independent marking while allowing hierarchical overlap and explicit memory--queue coupling.

\subsection{KV Caching and Distributed LLM Serving}
PagedAttention reduces fragmentation and enables flexible sharing of KV blocks~\cite{kwon2023pagedattention}. Prompt Cache reuses precomputed attention states for recurring prompt modules~\cite{gim2024promptcache}. SGLang's RadixAttention organizes shared prefixes in a radix tree and exposes cache-aware scheduling opportunities~\cite{zheng2024sglang}. CachedAttention extends state persistence across memory and storage tiers for multi-turn conversations~\cite{gao2024cachedattention}. Keyformer reduces the runtime KV footprint through token selection~\cite{adnan2024keyformer}, while Marconi studies prefix admission and eviction for hybrid attention--recurrent models~\cite{pan2024marconi}. These systems demonstrate that reuse value depends on both saved computation and memory footprint, but their optimization domains are primarily a single server or a tightly connected cluster.

Server-side memory management alone does not determine edge performance. Once reusable states are distributed across multiple nodes, the scheduler must trade prefix locality against transfer cost and congestion; this has motivated a second line of work on distributed serving.

DistServe separates prefill and decode to satisfy TTFT and per-token service-level objectives~\cite{zhong2024distserve}. Splitwise independently provisions the two phases~\cite{patel2023splitwise}, Sarathi-Serve uses chunked prefill to control throughput--latency interference~\cite{agrawal2024sarathi}, and Llumnix migrates active requests to reduce tail latency under dynamic load~\cite{sun2024llumnix}. Mooncake and MemServe treat KV states as disaggregated, elastic resources~\cite{qin2024mooncake,hu2024memserve}, while KVDirect optimizes inter-node state transfer for distributed disaggregated inference~\cite{chen2024kvdirect}. Preble jointly considers prefix locality and cluster load in distributed prompt scheduling~\cite{srivatsa2025preble}. CacheGen compresses and streams KV states over bandwidth-limited paths~\cite{liu2023cachegen}. ShadowServe further shows that remote KV transfer can become network-bound and that decompression may interfere with inference~\cite{xiang2025shadowserve}. Distributed prompt caching on edge devices confirms that partial state sharing can reduce TTFT but introduces wireless catalogue and state-transfer overhead~\cite{matsutani2026distributed}. Decentralized prefix-aware routing has also been explored for peer-to-peer LLM serving, where high network latency and cache-affinity hotspots limit the gain~\cite{nair2026p2p}.

\subsection{Research Gap and Positioning}
The literature therefore provides the essential mechanisms--prefix indexing, KV sharing, disaggregation, and cache-aware scheduling--but leaves their spatial interaction largely implicit. In particular, there is no general latency law that begins with a random edge deployment, allows partial prefix matches, accounts for cache-induced traffic concentration, and exposes the opportunity cost of persistent KV memory. This is the role of the model developed in the following sections.

Table~\ref{tab:positioning} positions the proposed framework. Existing systems optimize concrete serving mechanisms and provide essential empirical motivation. The present work is complementary: it asks how random spatial deployment, partial prefix reuse, queue load, and GPU memory jointly determine network-level latency reliability.

\begin{table*}[t]
\caption{Positioning Relative to Representative Caching and LLM-Serving Work}
\label{tab:positioning}
\centering
\scriptsize
\resizebox{\textwidth}{!}{%
\begin{tabular}{p{2.65cm}ccccccp{3.1cm}}
\toprule
Work & Prefix structure & Distributed nodes & Wireless/spatial model & Queue/load coupling & GPU-memory coupling & Analytical latency law & Primary focus \\
\midrule
Classical probabilistic caching~\cite{chae2016caching,chen2017probabilistic} & No & Yes & Yes & Limited & No & Yes & File availability and delivery \\
PagedAttention / Prompt Cache~\cite{kwon2023pagedattention,gim2024promptcache} & Yes & No & No & System-specific & Yes & No & Memory management and state reuse \\
SGLang / CachedAttention~\cite{zheng2024sglang,gao2024cachedattention} & Yes & Cluster & No & Scheduler-level & Yes & No & Prefix indexing and state persistence \\
Preble~\cite{srivatsa2025preble} & Yes & Cluster & No & Yes & Indirect & No & Prefix-aware load balancing \\
Mooncake / ShadowServe~\cite{qin2024mooncake,xiang2025shadowserve} & Yes & Yes & Fabric/backhaul & System-level & Yes & No & KV disaggregation and transfer \\
Distributed edge prompt caching~\cite{matsutani2026distributed} & Partial match & Yes & Wireless testbed & Limited & Yes & No & Cooperative edge implementation \\
\textbf{This work} & Prefix forest & Yes & \textbf{PPP edge network} & \textbf{Fixed-point queues} & \textbf{Explicit} & \textbf{TTFT coverage} & Spatial placement and association \\
\bottomrule
\end{tabular}%
}
\end{table*}

\section{System Model}
The model is built in four steps. We first describe where requests and edge GPUs are located, then represent reusable prompt states, next translate caching into communication and GPU-resource costs, and finally define the queue-aware association rule. This order is important: the routing decision is meaningful only after both the computational gain and the resource cost of a cache profile have been specified.

\subsection{Spatial Edge-LLM Network and Request Model}
Edge GPU nodes form a homogeneous PPP
\begin{equation}
\Phi_{\rm b}=\{x_i\}\subset\R^2,\qquad \Phi_{\rm b}\sim\PPP(\lambda_{\rm b}),
\end{equation}
where $\lambda_{\rm b}$ is the edge-node density. Users form an independent stationary point process with density $\lambda_{\rm u}$. Each user generates requests according to a Poisson process with rate $\nu$, giving the spatial request intensity
\begin{equation}
\Lambda_{\rm u}=\lambda_{\rm u}\nu.
\end{equation}
By Slivnyak's theorem, a typical request is placed at the origin.

\begin{table}[t]
\caption{Frequently Used Notation}
\label{tab:notation}
\centering
\scriptsize
\begin{tabular}{ll}
\toprule
Symbol & Meaning \\
\midrule
$\lambda_{\rm b},\Lambda_{\rm u}$ & Edge-node density and spatial request intensity \\
$\mathcal{T},\mathcal{Q},\mathcal{S}$ & Prefix forest, request types, and cache profiles \\
$L_q,H_{s,q}$ & Input length and reusable-prefix length \\
$\pi_s,\lambda_s$ & Profile probability and profile-tier density \\
$B_s,c_s$ & Persistent memory and runtime concurrency \\
$T^{\rm com},T^{\rm pf}$ & Communication and residual prefill delay \\
$A_{s,q},R_s$ & Association probability and nearest tier distance \\
$\lambda_s^{\rm q},\mu_s,\rho_s$ & Queue arrival rate, service rate, and utilization \\
$P_{\rm cov}^{\rm TTFT}$ & TTFT coverage probability \\
\bottomrule
\end{tabular}
\end{table}

\subsection{Prefix Forest and Spatial Cache Profiles}
Let $\mathcal{T}=(\mathcal{V},\mathcal{E})$ be a rooted prefix forest. A vertex $v\in\mathcal{V}$ represents an exact reusable token prefix, with depth $\ell_v$ tokens. A request type $q\in\mathcal{Q}$ occurs with probability $p_q$, has total input length $L_q$, and follows a root-to-leaf path $\mathcal{P}_q\subseteq\mathcal{V}$. The request distribution satisfies $\sum_qp_q=1$.

A cache profile $s\in\mathcal{S}$ is an ancestor-closed subset $\mathcal{C}_s\subseteq\mathcal{V}$: if $v\in\mathcal{C}_s$, every ancestor of $v$ is also in $\mathcal{C}_s$. The longest reusable prefix provided by profile $s$ to request $q$ is
\begin{equation}
H_{s,q}=\max\left(\{\ell_v:v\in\mathcal{C}_s\cap\mathcal{P}_q\}\cup\{0\}\right).
\label{eq:hitdepth}
\end{equation}
This definition supports multiple prompt families, branching prefixes, and partial matches. A chain of nested prefixes is a special case.

Let $B_s$ denote the persistent GPU memory occupied by profile $s$. If each cached token requires $\kappa_v$ bytes at vertex $v$, then a generic profile cost is
\begin{equation}
B_s=\sum_{v\in\mathcal{C}_s\setminus\mathcal{A}_s}\kappa_v(\ell_v-\ell_{{\rm par}(v)}),
\end{equation}
where $\mathcal{A}_s$ removes duplicate accounting of shared ancestors. In the uniform-token baseline, $B_s=\kappa$ times the number of unique cached tokens.

A cache profile is intentionally more general than a single cached leaf. It may represent, for example, a globally replicated system prompt together with one popular tool description, or a retrieval template together with several high-frequency document prefixes. Ancestor closure prevents inconsistent states: a node cannot store a child KV state without retaining the state needed to reach that child. In an implementation, profiles can be generated from a trace-derived prefix forest by retaining a small set of frequent or high-value subtrees.

For request $q$, define the raw prefill saving of profile $s$ as
\begin{equation}
V_{s,q}^{\rm pf}=g(L_q)-g(L_q-H_{s,q}).
\label{eq:rawsaving}
\end{equation}
This quantity separates the semantic popularity of a prefix from its computation value. Two prefixes with the same request probability can have very different value if their depths differ substantially.

The prefix forest specifies the logical reuse relation. To analyze a network rather than a single server, that logical structure must be mapped to a spatial placement process.

Each edge node independently selects one profile $S_x\in\mathcal{S}$ with
\begin{equation}
\Prb(S_x=s)=\pi_s,\qquad \sum_{s\in\mathcal{S}}\pi_s=1.
\end{equation}

\begin{lemma}[Profile thinning]
The nodes using profile $s$ form an independent PPP
\begin{equation}
\Phi_s\sim\PPP(\lambda_s),\qquad \lambda_s=\lambda_{\rm b}\pi_s.
\end{equation}
\end{lemma}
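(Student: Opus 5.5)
The plan is to view $\{S_x\}$ as i.i.d. marks on $\Phi_{\rm b}$ with mark distribution $(\pi_s)_{s\in\mathcal{S}}$, independent of the point locations. The marked process $\tilde\Phi=\{(x,S_x)\}$ is then a Poisson process on $\R^2\times\mathcal{S}$ with intensity measure $\lambda_{\rm b}\,dx\otimes\sum_s\pi_s\delta_s$; this is the marking theorem. Each $\Phi_s$ is the restriction of $\tilde\Phi$ to $\R^2\times\{s\}$. Because restrictions of a Poisson process to disjoint sets are independent Poisson processes, the two claims follow at once: $\Phi_s$ has intensity $\lambda_{\rm b}\pi_s$, and the tiers are mutually independent.

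To keep the argument self-contained, I would verify this directly with the probability generating functional or the Laplace functional. Fix nonnegative measurable test functions $f_s$ on $\R^2$, one per profile. Condition on $\Phi_{\rm b}$ and average over the independent marks. This gives
\begin{equation}
\E\Big[\exp\Big(-\sum_{s}\sum_{x\in\Phi_s}f_s(x)\Big)\Big]=\E\Big[\prod_{x\in\Phi_{\rm b}}\sum_s\pi_s e^{-f_s(x)}\Big].
\end{equation}
Next apply the PGFL of the homogeneous PPP, $\E\prod_{x}v(x)=\exp(-\lambda_{\rm b}\int(1-v(x))\,dx)$, with $v(x)=\sum_s\pi_s e^{-f_s(x)}$. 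Since $\sum_s\pi_s=1$, we have $1-v=\sum_s\pi_s(1-e^{-f_s})$. The exponent therefore splits as $\sum_s\lambda_{\rm b}\pi_s\int(1-e^{-f_s})$, and the joint Laplace functional factorizes as $\prod_s\exp(-\lambda_s\int(1-e^{-f_s(x)})dx)$. This is exactly the joint Laplace functional of independent $\PPP(\lambda_s)$ processes. Since Laplace functionals characterize the law of a point process, the lemma follows.

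The only delicate step is justifying the interchange between conditioning and the infinite product. On unbounded domains this requires care, so I would first take $f_s$ compactly supported: $\Phi_{\rm b}$ then has a.s. finitely many points in the support, and the product is finite. Extension to general $f_s$ follows by monotone convergence, although compact support already suffices for characterization. Finally, if some $\pi_s=0$, the tier $\Phi_s$ is empty, which is consistent with $\PPP(0)$.
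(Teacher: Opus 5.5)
Your proof is correct, but it takes a different route from the paper's. The paper works with counting measures. Conditional on $N_{\rm b}(\mathcal B)=n$ points in a bounded set, the profile-$s$ count is binomial$(n,\pi_s)$, and mixing over the Poisson law of $n$ gives a Poisson count with mean $\lambda_{\rm b}\pi_s|\mathcal B|$. Independence across disjoint sets and across profiles is then asserted by ``applying the same argument jointly'' to disjoint regions and multinomial marks. You instead compute the joint Laplace functional of $(\Phi_s)_{s\in\mathcal S}$ in one step from the PGFL of $\Phi_{\rm b}$. There the identity $1-\sum_s\pi_se^{-f_s}=\sum_s\pi_s(1-e^{-f_s})$, which uses $\sum_s\pi_s=1$, plays the role of the paper's multinomial splitting. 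The paper's argument is more elementary, needing only a Poisson--binomial mixture. However, its joint step is only sketched: the multivariate Poisson--multinomial computation over several disjoint sets and several profiles, plus the fact that independent Poisson counts characterize a PPP. Your factorization delivers the Poisson law, independence over disjoint regions, and mutual independence of the tiers at once, so nothing is deferred to ``the same argument.'' Its cost is reliance on the PGFL and on the fact that Laplace functionals over compactly supported nonnegative test functions determine the joint law, which you correctly note. Your handling of the conditioning step is also sound: restricting to compactly supported $f_s$ makes the product finite almost surely. In any case, all factors lie in $[0,1]$, so the infinite product would cause no difficulty. Your approach also carries over unchanged to inhomogeneous intensities or location-dependent marking probabilities.
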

\begin{proof}
Let $\mathcal B\subset\R^2$ be a bounded Borel set. Conditional on $N_{\rm b}(\mathcal B)=n$ original edge nodes in $\mathcal B$, the number assigned profile $s$ is binomial with parameters $(n,\pi_s)$ because marks are independent across nodes. Averaging over the Poisson law of $N_{\rm b}(\mathcal B)$ gives a Poisson random variable with mean $\lambda_{\rm b}\pi_s|\mathcal B|$. Applying the same argument jointly to disjoint spatial sets and to the multinomial profile marks shows that the resulting counting measures have independent increments and that different profile processes are mutually independent. Hence $\Phi_s$ is a homogeneous PPP with intensity $\lambda_s=\lambda_{\rm b}\pi_s$.
\end{proof}

Let $R_s$ be the distance from the typical user to the nearest profile-$s$ node. For $\lambda_s>0$,
\begin{equation}
f_{R_s}(r)=2\pi\lambda_sr\exp(-\pi\lambda_sr^2),\quad r\ge0.
\label{eq:nearest}
\end{equation}

\subsection{Communication, Computation, and GPU Memory}
We use the delay-equivalent radio model
\begin{equation}
T^{\rm com}(r)=\xi r^{\alpha},\qquad \alpha>2,
\label{eq:commdelay}
\end{equation}
where $\xi$ absorbs prompt size, bandwidth, spectral efficiency, retransmissions, and protocol overhead. This model preserves tractable association regions and is widely used as a distance-dependent service-cost abstraction. Section~\ref{sec:extensions} outlines an instantaneous SINR extension.

Let $g(n)$ denote the prefill time for $n$ uncached tokens. We assume that $g(0)=0$ and $g(n)$ is nondecreasing. A calibrated specialization is
\begin{equation}
g(n)=a n+b n^2,
\label{eq:prefill}
\end{equation}
where the linear term captures projection and memory operations and the quadratic term approximates attention computation. The residual prefill delay is
\begin{equation}
T^{\rm pf}_{s,q}=g(L_q-H_{s,q}).
\label{eq:residualprefill}
\end{equation}

Communication and residual prefill are only two parts of the cost. Persistent states also consume the same accelerator memory needed by active requests, creating the central reuse--concurrency coupling.

Each node has usable GPU memory $M$. Persistent profile $s$ leaves $M-B_s$ bytes for active requests. The serving engine applies admission control with a per-sequence runtime budget $m^{\rm run}$, yielding concurrency
\begin{equation}
c_s=\left\lfloor\frac{M-B_s}{m^{\rm run}}\right\rfloor.
\label{eq:concurrency}
\end{equation}
Profiles with $c_s<1$ are infeasible.

\subsection{Queueing and Load-Aware Association}
The mean GPU occupancy of a type-$q$ request served by profile $s$ is
\begin{equation}
S_{s,q}=T^{\rm pf}_{s,q}+D_q,
\label{eq:occupancy}
\end{equation}
where $D_q$ summarizes decode occupancy and other model-execution overhead. We use an analytically calibrated $M/M/c_s$ approximation. The approximation is not intended to reproduce every detail of continuous batching; rather, it exposes the first-order interaction between cache affinity, concurrency, and queueing. Discrete-event simulation can replace the queue module without changing the spatial derivation.

Let $A_{s,q}$ be the association probability of type $q$ to profile $s$. The arrival rate per profile-$s$ node is
\begin{equation}
\lambda^{\rm q}_s=\frac{\Lambda_{\rm u}}{\lambda_s}\sum_qp_qA_{s,q}.
\label{eq:arrival}
\end{equation}
The request mixture among profile-$s$ jobs is
\begin{equation}
\omega_{s,q}=\frac{p_qA_{s,q}}{\sum_jp_jA_{s,j}},
\end{equation}
and the effective service rate is
\begin{equation}
\mu_s=\left(\sum_q\omega_{s,q}S_{s,q}\right)^{-1}.
\label{eq:service}
\end{equation}
Define utilization $\rho_s=\lambda^{\rm q}_s/(c_s\mu_s)$. Stability requires $\rho_s<1$.

For a stable $M/M/c_s$ queue, the Erlang-C waiting probability is
\begin{equation}
P^{\rm wait}_s=
\frac{\frac{a_s^{c_s}}{c_s!(1-\rho_s)}}
{\sum_{n=0}^{c_s-1}\frac{a_s^n}{n!}+\frac{a_s^{c_s}}{c_s!(1-\rho_s)}},
\quad a_s=\frac{\lambda^{\rm q}_s}{\mu_s},
\label{eq:erlangc}
\end{equation}
and the mean waiting time is
\begin{equation}
\bar W_s=\frac{P^{\rm wait}_s}{c_s\mu_s-\lambda^{\rm q}_s}.
\label{eq:meanwait}
\end{equation}
The waiting-time CDF is
\begin{equation}
F_{W_s}(w)=1-P^{\rm wait}_s\exp[-(c_s\mu_s-\lambda^{\rm q}_s)w],\quad w\ge0.
\label{eq:waitcdf}
\end{equation}

The queueing model closes the resource side of the problem. It remains to specify how a request uses these quantities when comparing candidate profiles.

A type-$q$ request evaluates the nearest node of each nonempty profile. Its predicted TTFT cost is
\begin{equation}
J_{s,q}(r)=\xi r^{\alpha}+\bar W_s+T^{\rm pf}_{s,q}.
\label{eq:cost}
\end{equation}
The selected profile is
\begin{equation}
s_q^*=\argmin_{s:\lambda_s>0}J_{s,q}(R_s).
\label{eq:association}
\end{equation}
Association uses mean waiting time, which can be broadcast by edge nodes or estimated by a controller. Performance analysis below retains the random waiting-time distribution.

The baseline analysis makes five deliberate abstractions. First, exact prefix equality is assumed, so cached states do not change model outputs. Second, profile marks are independent across nodes; correlated or coordinated placements are discussed later. Third, the association controller uses a slowly varying communication-cost model and advertised mean queue delay, while instantaneous queue randomness is retained in coverage analysis. Fourth, the $M/M/c$ model represents admission-controlled concurrent sequences rather than individual GPU cores. Fifth, one model replica is assumed at every edge node. These assumptions expose the spatial cache--compute tradeoff with tractable mathematics; none is required by the cache-profile definition itself.

The analysis applies on a time scale over which the prefix-popularity distribution and cache profiles are approximately stationary. Faster cache admission and eviction can operate within each profile, whereas the spatial placement probabilities represent a slower planning layer.

\section{Spatial and Queue-Coupled Performance Analysis}
The analysis follows the operational sequence of a request. We first determine which profile wins the spatial competition and how far away the selected node is. We then account for the fact that these associations create the very queue delays used by the router. Once this fixed point is established, TTFT reliability and computation reuse can be evaluated, and the resulting expressions can be translated into structural design rules.

\subsection{Spatial Association Law}
Define
\begin{equation}
\Delta_{s,q}=\bar W_s+T^{\rm pf}_{s,q}
\end{equation}
and
\begin{equation}
G_{s,t,q}(r)=\pos{r^{\alpha}+\frac{\Delta_{s,q}-\Delta_{t,q}}{\xi}}^{1/\alpha}.
\label{eq:guard}
\end{equation}

\begin{theorem}[Profile-association probability]
For a type-$q$ request, the probability of associating with profile $s$ is
\begin{equation}
A_{s,q}=\int_0^\infty 2\pi\lambda_sr
\exp\!\left[-\pi\lambda_sr^2-\pi\sum_{t\ne s}\lambda_tG_{s,t,q}^2(r)\right]\,dr.
\label{eq:assocprob}
\end{equation}
\end{theorem}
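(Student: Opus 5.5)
The plan is to condition on the distance $R_s=r$ to the nearest profile-$s$ node and then exploit the mutual independence of the tier processes $\Phi_t$ from the profile-thinning lemma. The waiting times $\bar W_t$ are treated as deterministic, advertised constants; they are fixed by the fixed point and do not depend on the realization seen by the typical request. Under this convention, the event $\{s_q^*=s\}$ coincides, up to ties, with the event
\[
\{J_{s,q}(R_s)<J_{t,q}(R_t)\ \text{for all } t\ne s \text{ with } \lambda_t>0\}.
\]
Ties occur with probability zero because the $R_t$ have continuous laws. By the law of total probability and the density \eqref{eq:nearest},
\[
A_{s,q}=\int_0^\infty f_{R_s}(r)\,\Prb\!\left(J_{t,q}(R_t)>J_{s,q}(r)\ \forall t\ne s\mid R_s=r\right)dr .
\]

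The first key step is to rewrite each pairwise comparison as a distance condition. The inequality $\xi R_t^{\alpha}+\Delta_{t,q}>\xi r^{\alpha}+\Delta_{s,q}$ is equivalent to
\[
R_t^{\alpha}>r^{\alpha}+(\Delta_{s,q}-\Delta_{t,q})/\xi .
\]
There are two cases. If the right-hand side is nonnegative, the condition is $R_t>G_{s,t,q}(r)$, using the monotonicity of $x\mapsto x^{1/\alpha}$ on $[0,\infty)$. If it is negative, the inequality holds automatically because $R_t\ge 0$, and this is consistent with $G_{s,t,q}(r)=0$ through the positive part $\pos{\cdot}$. So in every case the event is $\{R_t>G_{s,t,q}(r)\}$, which is exactly the event that $\Phi_t$ has no points in the disc of radius $G_{s,t,q}(r)$ around the origin.

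The second step is to evaluate this probability. Since the processes $\Phi_t$, $t\ne s$, are independent of each other and of $\Phi_s$, conditioning on $R_s=r$ does not affect them. The conditional probability therefore factorizes into a product of void probabilities of homogeneous PPPs,
\[
\prod_{t\ne s}\exp\!\left(-\pi\lambda_t G_{s,t,q}^2(r)\right).
\]
Tiers with $\lambda_t=0$ contribute a factor of $1$, consistent with the restriction $\lambda_t>0$ in \eqref{eq:association}. Multiplying by $f_{R_s}(r)=2\pi\lambda_s r e^{-\pi\lambda_s r^2}$ and combining the exponentials gives \eqref{eq:assocprob}.

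The only delicate point, and the expected main obstacle, is justifying that the $\bar W_t$ may be treated as constants rather than as random quantities correlated with the spatial realization. The argument handles this by placing the result at the level of a given (fixed-point) value of $\{\bar W_t\}$: the association rule uses advertised means, and the theorem is read conditionally on them. The fixed-point consistency is then a separate issue, addressed later in the paper. A secondary check is the tie and boundary case where $r^{\alpha}+(\Delta_{s,q}-\Delta_{t,q})/\xi=0$, but this is a measure-zero event and does not affect the integral.
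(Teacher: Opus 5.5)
Your proof is correct and takes essentially the same route as the paper's: condition on $R_s=r$, turn each pairwise cost comparison into the guard-radius event $R_t\ge G_{s,t,q}(r)$, factor over the mutually independent thinned tiers using PPP void probabilities, and integrate against the nearest-neighbor density of $R_s$. Your explicit sign case split for the positive part, your handling of tiers with $\lambda_t=0$, and your remark that the advertised $\bar W_t$ are held fixed are more careful statements of steps that the paper's proof (and Appendix~A) takes implicitly.
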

\begin{proof}
Condition on $R_s=r$. The candidate profile-$s$ node wins only if its predicted cost is no larger than that of the nearest candidate in every other active profile $t$. Thus,
\begin{equation}
\xi r^\alpha+\Delta_{s,q}\le \xi R_t^\alpha+\Delta_{t,q},\qquad \forall t\ne s.
\end{equation}
After rearrangement and enforcement of the nonnegativity of distance, the event above is equivalent to $R_t\ge G_{s,t,q}(r)$ for every $t\ne s$. Because the independently marked profile processes are mutually independent, these events factor across $t$. The nearest-neighbor void probability of tier $t$ is
\begin{equation}
\Prb[R_t\ge u]=\exp(-\pi\lambda_tu^2).
\end{equation}
Therefore, the conditional probability that profile $s$ wins is the product of the corresponding void probabilities. Multiplying this product by the density of $R_s$ in~\eqref{eq:nearest} and integrating over $r$ yields~\eqref{eq:assocprob}. Ties have probability zero under the continuous distance distributions. Appendix~A provides the complete conditioning argument and normalization step.
\end{proof}

\begin{corollary}[Conditional serving distance]
The conditional density of serving distance, given profile $s$ and request type $q$, is
\begin{equation}
f_{R|s,q}(r)=\frac{f^{\rm sel}_{s,q}(r)}{A_{s,q}},
\end{equation}
where
\begin{equation}
f^{\rm sel}_{s,q}(r)=2\pi\lambda_sr
\exp\!\left[-\pi\lambda_sr^2-\pi\sum_{t\ne s}\lambda_tG_{s,t,q}^2(r)\right].
\label{eq:fsel}
\end{equation}
\end{corollary}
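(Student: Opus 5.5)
The plan is to read the corollary directly off the proof of the profile-association theorem. That proof already contains the joint law of the event $\{s_q^*=s\}$ and the candidate distance $R_s$; I only need to isolate it before integrating over $r$.

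First, I define the serving distance on the event $\{s_q^*=s\}$. There it equals $R_s$, the distance to the nearest profile-$s$ node. The reason is that association~\eqref{eq:association} only compares the nearest node of each active tier. So the conditional law of the serving distance given profile $s$ is the law of $R_s$ conditioned on $\{s_q^*=s\}$.

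Second, I compute the joint sub-density $\Prb[R_s\in dr,\ s_q^*=s]$. This repeats the conditioning step of the theorem.
\begin{itemize}
\item Fix $R_s=r$. The event $\{s_q^*=s\}$ is then, up to probability-zero ties, the intersection over $t\ne s$ of $\{R_t\ge G_{s,t,q}(r)\}$, using the definition~\eqref{eq:guard}.
\item By the profile-thinning lemma, the tiers $\Phi_t$ are independent PPPs, and they are independent of $\Phi_s$. So $R_s$ is independent of $(R_t)_{t\ne s}$, and the conditional probability equals the product of void probabilities $\prod_{t\ne s}\exp(-\pi\lambda_t G_{s,t,q}^2(r))$.
\item Multiplying by $f_{R_s}(r)$ from~\eqref{eq:nearest} gives $\Prb[R_s\in dr,\ s_q^*=s]=f^{\rm sel}_{s,q}(r)\,dr$, with $f^{\rm sel}_{s,q}$ exactly as in~\eqref{eq:fsel}.
\end{itemize}

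Third, I normalize. For any Borel $\mathcal{B}\subseteq[0,\infty)$, Bayes' rule gives $\Prb[R_s\in\mathcal{B}\mid s_q^*=s]=\int_{\mathcal{B}}f^{\rm sel}_{s,q}(r)\,dr\,/\,\Prb[s_q^*=s]$. The theorem identifies the denominator as $A_{s,q}=\int_0^\infty f^{\rm sel}_{s,q}(r)\,dr$. Hence $f_{R|s,q}=f^{\rm sel}_{s,q}/A_{s,q}$, and it integrates to one automatically.

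Two technical points need care, and I expect the main (mild) obstacle to be justifying the conditioning when $A_{s,q}>0$.
\begin{itemize}
\item \emph{Measurability.} The map $r\mapsto\Prb[s_q^*=s\mid R_s=r]$ must be a valid version of the conditional probability. This follows because $\Delta_{s,q}$ and $\Delta_{t,q}$ are deterministic (fixed at the queue fixed point), so $G_{s,t,q}$ is a continuous function of $r$. Independence of the tiers then makes the conditional probability a product of continuous functions.
\item \emph{Positivity.} I need $A_{s,q}>0$ whenever $\lambda_s>0$. For small $r$ the integrand is positive unless some $G_{s,t,q}(r)=\infty$, which never happens. So $A_{s,q}>0$ and the quotient is well defined.
\end{itemize}
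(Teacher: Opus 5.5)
Your proposal is correct and follows the paper's own argument: you read off $f^{\rm sel}_{s,q}(r)$ as the joint sub-density of $\{R_s\in dr\}$ and $\{s_q^*=s\}$ from the conditioning step of Theorem~1, then normalize by $A_{s,q}=\Prb(s_q^*=s)$ via Bayes' rule. Your extra remarks on measurability and on $A_{s,q}>0$ are valid refinements that the paper leaves implicit; positivity holds because the integrand is strictly positive for every $r>0$, since each guard radius is finite.
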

\begin{proof}
The numerator is the joint density of the events $\{s_q^*=s\}$ and $\{R_s\in dr\}$ obtained in the proof of Theorem~1. Dividing this joint density by $\Prb(s_q^*=s)=A_{s,q}$ is Bayes' rule for densities and gives the stated conditional law. Its integral is one because the numerator integrates to $A_{s,q}$.
\end{proof}

The cache profile acts as a computation-induced tier bias. Unlike power-biased cellular association, the bias depends on request type through prefix overlap and residual prefill.

The association law also yields the serving-distance distribution and several consistency checks. These consequences are grouped here because they all describe the geometry induced by the same request-dependent tier bias.

\begin{corollary}[Equal non-spatial costs]
If $\Delta_{s,q}=\Delta_q$ for every active profile, association reduces to nearest-node association over the superposed PPP and
\begin{equation}
A_{s,q}=\frac{\lambda_s}{\sum_t\lambda_t}=\pi_s.
\label{eq:equalcost}
\end{equation}
\end{corollary}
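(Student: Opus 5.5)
The plan is to specialize Theorem~1 directly. If $\Delta_{s,q}=\Delta_{t,q}=\Delta_q$ for all active profiles, the guard function~\eqref{eq:guard} reduces to
\[
G_{s,t,q}(r)=\pos{r^{\alpha}}^{1/\alpha}=r ,
\]
because $r\ge 0$ and the bracketed shift vanishes.

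Substituting this into~\eqref{eq:assocprob} turns the exponent into $-\pi(\lambda_s+\sum_{t\ne s}\lambda_t)r^2=-\pi\Lambda r^2$, where $\Lambda=\sum_t\lambda_t$. The integral then becomes
\[
A_{s,q}=\lambda_s\int_0^\infty 2\pi r\,e^{-\pi\Lambda r^2}\,dr .
\]
Under the substitution $u=\pi\Lambda r^2$ this equals $\lambda_s/\Lambda$, since $\int_0^\infty 2\pi r e^{-\pi\Lambda r^2}dr=1/\Lambda$.

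Next, I would identify $\Lambda$. Taking the sum over the active profiles (those with $\lambda_s>0$, the only ones in the $\argmin$ of~\eqref{eq:association}) gives $\sum_t\lambda_t=\lambda_{\rm b}\sum_t\pi_t=\lambda_{\rm b}$, because inactive profiles have $\pi_t=0$ and contribute nothing. Hence $A_{s,q}=\lambda_{\rm b}\pi_s/\lambda_{\rm b}=\pi_s$.

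For the structural claim, I would note that with equal $\Delta$ the cost~\eqref{eq:cost} differs across profiles only through $\xi r^\alpha$. This term is strictly increasing in $r$, so the minimizer is the profile whose nearest node is closest overall, i.e.\ the nearest point of $\bigcup_s\Phi_s=\Phi_{\rm b}$. The superposition of the independent PPPs from the profile-thinning lemma is again $\PPP(\lambda_{\rm b})$. As a cross-check, the nearest point of the superposition carries mark $s$ with probability $\lambda_s/\lambda_{\rm b}$ by independent marking, which agrees with the computation above.

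No step here is a real obstacle. The only points needing care are the restriction to active tiers, so that the normalization is exactly $\lambda_{\rm b}$, and the fact that the $[\cdot]_+$ clipping is inactive when the shift is zero.
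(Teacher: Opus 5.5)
Your proposal is correct and follows the paper's proof. Like the paper, you set $G_{s,t,q}(r)=r$, collapse the exponent to the superposed intensity $\sum_t\lambda_t=\lambda_{\rm b}$, and evaluate the resulting nearest-neighbor integral to obtain $\lambda_s/\lambda_{\rm b}=\pi_s$. Your extra observation is also sound: the cost minimizer is the nearest point of $\Phi_{\rm b}$, which carries mark $s$ with probability $\pi_s$ by independent marking. This explicitly justifies the ``reduces to nearest-node association'' clause, which the paper only asserts.
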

\begin{proof}
Equal non-spatial costs imply $G_{s,t,q}(r)=r$ for all active $s$ and $t$. Substitution into~\eqref{eq:assocprob} gives
\begin{equation}
A_{s,q}=\int_0^\infty 2\pi\lambda_s r\exp\!\left(-\pi r^2\sum_t\lambda_t\right)dr.
\end{equation}
Factoring $\lambda_s/\sum_t\lambda_t$ leaves the nearest-neighbor density of the superposed PPP, whose integral is one. Since $\sum_t\lambda_t=\lambda_{\rm b}$, the result is $A_{s,q}=\lambda_s/\lambda_{\rm b}=\pi_s$.
\end{proof}

\begin{proposition}[Cost monotonicity]
Holding tier densities and competing costs fixed, $A_{s,q}$ is nonincreasing in $\Delta_{s,q}$ and nondecreasing in any reduction of its residual prefill time.
\end{proposition}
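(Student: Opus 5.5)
The plan is to argue directly from the integral representation~\eqref{eq:assocprob} of Theorem~1. Holding $\lambda_s$, all $\lambda_t$, and all competing costs $\Delta_{t,q}$ ($t\neq s$) fixed, the only place where $\Delta_{s,q}$ enters $A_{s,q}$ is through the guard radii $G_{s,t,q}(r)$ defined in~\eqref{eq:guard}. So the whole proof reduces to a pointwise monotonicity statement for the integrand, followed by an integration step.

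First I would show that, for every fixed $r\ge0$ and every $t\ne s$, the map $\Delta_{s,q}\mapsto G_{s,t,q}^2(r)$ is nondecreasing. The argument $r^\alpha+(\Delta_{s,q}-\Delta_{t,q})/\xi$ is affine and increasing in $\Delta_{s,q}$ because $\xi>0$. The positive-part map $[\cdot]_+$ is nondecreasing, and $u\mapsto u^{2/\alpha}$ is nondecreasing on $[0,\infty)$. Their composition is therefore nondecreasing. Consequently the exponent $-\pi\lambda_s r^2-\pi\sum_{t\ne s}\lambda_t G_{s,t,q}^2(r)$ is nonincreasing in $\Delta_{s,q}$. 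Since $\exp$ is increasing and the prefactor $2\pi\lambda_s r\ge0$ does not depend on $\Delta_{s,q}$, the integrand $f^{\rm sel}_{s,q}(r)$ in~\eqref{eq:fsel} is pointwise nonincreasing in $\Delta_{s,q}$.

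Next I would integrate. The integrand is nonnegative and is dominated by the nearest-neighbour density $2\pi\lambda_s r e^{-\pi\lambda_s r^2}$, which integrates to one. Integrating the pointwise inequality over $r\in[0,\infty)$ is therefore legitimate and gives $A_{s,q}(\Delta)\le A_{s,q}(\Delta')$ whenever $\Delta\ge\Delta'$.

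For the residual-prefill claim, note that $\Delta_{s,q}=\bar W_s+T^{\rm pf}_{s,q}$. Reducing $T^{\rm pf}_{s,q}$ while holding $\bar W_s$ fixed, as the hypothesis "competing costs and densities fixed" is interpreted, lowers $\Delta_{s,q}$. The first part then gives a nondecreasing $A_{s,q}$. Examples of such a reduction are a deeper hit $H_{s,q}$, which lowers $g(L_q-H_{s,q})$ since $g$ is nondecreasing.

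The main obstacle is interpretive rather than technical. In the full model, $\bar W_s$ itself depends on the associations through the fixed point~\eqref{eq:arrival}--\eqref{eq:meanwait}. The proposition must therefore be read as a statement about the association map with advertised waiting times held fixed, not about the closed-loop equilibrium. I would state this explicitly. I would also remark that strict decrease holds whenever some $\lambda_t>0$ and the guard is active on a set of $r$ of positive measure.
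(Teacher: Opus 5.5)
Your proposal is correct and follows the paper's proof essentially step by step. Both arguments use the same chain: $G_{s,t,q}(r)$ is nondecreasing in $\Delta_{s,q}$ via the nondecreasing map $x\mapsto[x]_+^{1/\alpha}$; the integrand is then ordered pointwise, with the profile-$s$ nearest-neighbour factor unchanged; integration preserves that order; and $T^{\rm pf}_{s,q}$ enters $\Delta_{s,q}$ additively. Your explicit remark that $\bar W_s$ is held at its advertised value, so the claim concerns the association map rather than the closed-loop fixed point, states an assumption the paper leaves implicit.
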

\begin{proof}
Fix the tier densities, the costs of the competing profiles, and a distance $r$. The mapping $x\mapsto[x]_+^{1/\alpha}$ is nondecreasing, hence each guard radius $G_{s,t,q}(r)$ is nondecreasing in $\Delta_{s,q}$. Each factor $\exp[-\pi\lambda_tG_{s,t,q}^2(r)]$ is therefore nonincreasing, while the nearest-profile-$s$ density is unchanged. The integrand in~\eqref{eq:assocprob} is thus ordered pointwise. Integration preserves the order, proving that $A_{s,q}$ is nonincreasing in $\Delta_{s,q}$. Since $T^{\rm pf}_{s,q}$ enters $\Delta_{s,q}$ additively, reducing the residual prefill time cannot reduce the association probability. Strict monotonicity holds whenever a positive-measure set of distances has an active competing guard radius and $\lambda_t>0$ for at least one competitor.
\end{proof}

The conditional mean serving distance is
\begin{equation}
\bar R_{s,q}=\frac{1}{A_{s,q}}\int_0^\infty r f^{\rm sel}_{s,q}(r)\\,dr,
\label{eq:meanservingdistance}
\end{equation}
and the network-wide mean distance is $\bar R=\sum_qp_q\sum_sA_{s,q}\bar R_{s,q}$. Unlike nearest-node routing, $\bar R$ may increase when prefix reuse becomes more valuable. This is not an inefficiency by itself: the additional radio path can be justified by a larger computation saving.

\subsection{Load--Association Coupling and Stability}
The preceding expressions treat the advertised queue delays as fixed. In operation, however, the association probabilities determine the arrival rate of each profile tier, and those arrivals determine its delay. The spatial and queueing models must therefore be solved jointly.

Association probabilities depend on $\bar W_s$, while queue delay depends on the arrivals generated by those probabilities. Let $\mathcal{A}(\bar{\mathbf W})$ denote~\eqref{eq:assocprob} and let $\mathcal{Q}(\mathbf A)$ denote~\eqref{eq:arrival}--\eqref{eq:meanwait}. A consistent operating point satisfies
\begin{equation}
\bar{\mathbf W}^{*}=\mathcal{Q}(\mathcal{A}(\bar{\mathbf W}^{*})).
\label{eq:fixedpoint}
\end{equation}
We use the damped iteration
\begin{equation}
\bar{\mathbf W}^{(i+1)}=(1-\eta)\bar{\mathbf W}^{(i)}+\eta\mathcal{Q}(\mathcal{A}(\bar{\mathbf W}^{(i)})),
\label{eq:damped}
\end{equation}
with $0<\eta\le1$. A placement is infeasible if any active profile reaches $\rho_s\ge1$.

\begin{remark}
The fixed point implements a negative-feedback mechanism. If a deep-prefix tier attracts excessive demand, its queue delay rises, reducing its future association region. Queue-unaware routing removes this feedback and can create a cache-affinity hotspot.
\end{remark}

A fixed point is physically meaningful only when every active tier can serve the traffic assigned to it. The following capacity condition and implementation rule make this stability requirement explicit.

A necessary network-wide stability condition is obtained by comparing offered traffic with aggregate service capacity per unit area.

\begin{proposition}[Area-capacity necessary condition]
For a fixed request mixture and active profile set, any stable association must satisfy
\begin{equation}
\Lambda_{\rm u}<\sum_{s\in\mathcal{S}}\lambda_s c_s\mu_s.
\label{eq:areastability}
\end{equation}
\end{proposition}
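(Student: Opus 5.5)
The plan is to sum the per-tier stability requirement $\rho_s<1$ over the active tiers and use conservation of traffic to identify the left-hand side. Stability means every active profile satisfies $\rho_s<1$, i.e.\ $\lambda^{\rm q}_s<c_s\mu_s$, where $\mu_s$ is the effective service rate~\eqref{eq:service} at the operating point. Multiplying by the tier density $\lambda_s>0$ turns this into a per-unit-area statement,
\begin{equation*}
\lambda_s\lambda^{\rm q}_s<\lambda_s c_s\mu_s .
\end{equation*}

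The key step is conservation of traffic. By~\eqref{eq:arrival},
\begin{equation*}
\lambda_s\lambda^{\rm q}_s=\Lambda_{\rm u}\sum_q p_qA_{s,q}.
\end{equation*}
Summing over active profiles and exchanging the finite sums gives $\Lambda_{\rm u}\sum_q p_q\sum_s A_{s,q}$. I will use that $\sum_{s:\lambda_s>0}A_{s,q}=1$ for every type $q$. This holds because the association rule~\eqref{eq:association} selects exactly one active profile almost surely, and ties have probability zero, as in the proof of Theorem~1 and its normalization in Appendix~A. Together with $\sum_q p_q=1$, the offered area traffic is exactly $\Lambda_{\rm u}$.

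Summing the strict per-tier inequalities over active profiles then yields~\eqref{eq:areastability}. Inactive profiles have $\lambda_s=0$ and contribute zero to both sides, so the sum can be taken over all of $\mathcal{S}$. Strictness survives because the active set is nonempty and finite.

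The main subtlety is that $\mu_s$ depends on the association through the mixture $\omega_{s,q}$. The right-hand side is therefore evaluated at the same stable operating point, not at a fixed a priori capacity. I will state explicitly that ``fixed request mixture and active profile set'' means the $\mu_s$ are those induced by the association in question. A tier with $\sum_q p_qA_{s,q}=0$ leaves $\mu_s$ undefined. Such a tier receives no traffic, so I will simply drop it from the left-hand sum. I will also assign it any positive capacity on the right, since this only weakens the inequality. Finally, I will remark that the condition is only necessary. Aggregate capacity may exceed demand while an individual affinity-hot tier still has $\rho_s\ge1$, which is precisely the hotspot phenomenon.
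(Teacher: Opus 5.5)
Your proposal is correct and uses the same argument as the paper. Both combine the conservation identity $\sum_s\lambda_s\lambda^{\rm q}_s=\Lambda_{\rm u}\sum_qp_q\sum_sA_{s,q}=\Lambda_{\rm u}$ with the per-tier requirement $\lambda^{\rm q}_s<c_s\mu_s$; you sum the strict inequalities directly, while the paper argues the contrapositive, and your zero-traffic caveat is harmless but vacuous because the integrand in~\eqref{eq:assocprob} is strictly positive for every active tier, so $A_{s,q}>0$ and $\mu_s$ is always well defined.
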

\begin{proof}
Multiplying the per-node arrival rate in~\eqref{eq:arrival} by the spatial node density and summing over profiles gives
\begin{equation}
\sum_s\lambda_s\lambda_s^{\rm q}
=\Lambda_{\rm u}\sum_qp_q\sum_sA_{s,q}=\Lambda_{\rm u},
\end{equation}
where the last equality follows because every request is associated with exactly one active profile. The maximum nominal service rate contributed by profile $s$ per unit area is $\lambda_sc_s\mu_s$. If the total offered arrival intensity is at least the sum of these capacities, rate conservation precludes all profile queues from having a strictly smaller arrival rate than service rate. Consequently, at least one active tier must violate $\rho_s<1$, establishing the necessary condition. The condition is not sufficient because cache affinity may overload a sparse tier even when aggregate capacity is adequate.
\end{proof}

Condition~\eqref{eq:areastability} is necessary but not sufficient because cache affinity may concentrate traffic on a sparse profile even when total capacity is adequate. The fixed point resolves this imbalance by enlarging $\bar W_s$ at an overloaded tier, which shrinks its association region. In implementation, we initialize all delays at zero, apply damping, and stop when
\begin{equation}
\left\|\bar{\mathbf W}^{(i+1)}-\bar{\mathbf W}^{(i)}\right\|_\infty<\epsilon_{\rm fp}.
\end{equation}
When a candidate profile becomes unstable, its advertised cost is set to a large penalty during the outer placement search. Fig.~\ref{fig:fixedpoint} later shows stable convergence for the baseline workload.

\subsection{TTFT Reliability and Computation Reuse}
After the load--association fixed point has been obtained, the same joint selection-distance density can be combined with the queue-delay distribution to characterize user-perceived latency.

The realized TTFT of a request served at profile $s$ and distance $r$ is
\begin{equation}
T_{s,q}=\xi r^\alpha+W_s+T^{\rm pf}_{s,q}.
\end{equation}

\begin{theorem}[TTFT coverage]
For latency target $\tau$, the network-wide TTFT coverage probability is
\begin{align}
P_{\rm cov}^{\rm TTFT}(\tau)
=&\sum_qp_q\sum_s\int_0^{r^{\max}_{s,q}(\tau)} f^{\rm sel}_{s,q}(r)\nonumber\\
&\times F_{W_s}(\tau-\xi r^\alpha-T^{\rm pf}_{s,q})\,dr,
\label{eq:ttftcoverage}
\end{align}
where
\begin{equation}
r^{\max}_{s,q}(\tau)=\pos{\frac{\tau-T^{\rm pf}_{s,q}}{\xi}}^{1/\alpha}.
\end{equation}
\end{theorem}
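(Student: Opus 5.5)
The plan is to condition successively on the request type, the selected profile, and the serving distance, and then to use the conditional-independence structure of the model to evaluate the innermost probability with the Erlang-C waiting-time law~\eqref{eq:waitcdf}.

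\textbf{Step 1: decompose over type and profile.} By the law of total probability over the request type $q$ (probability $p_q$) and the disjoint events $\{s_q^*=s\}$, whose union has probability one because ties occur with probability zero (Theorem~1), we write
\begin{equation}
P_{\rm cov}^{\rm TTFT}(\tau)=\sum_q p_q\sum_s \Prb\big[T_{s,q}\le\tau,\ s_q^*=s\,\big|\,q\big].
\end{equation}

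\textbf{Step 2: condition on the serving distance.} For each pair $(s,q)$ we condition on $R_s=r$. The proofs of Theorem~1 and Corollary~1 show that the joint density of $\{s_q^*=s\}$ and $\{R_s\in dr\}$ is $f^{\rm sel}_{s,q}(r)$ in~\eqref{eq:fsel}. The selection event depends only on the point processes and on the advertised means $\bar W_t$, which are fixed at the fixed point~\eqref{eq:fixedpoint}. It does not depend on the realized waiting time. We therefore adopt the modeling assumption implicit in the baseline abstractions: the waiting time $W_s$ at the tagged node is independent of the geometry and of the association event, and is distributed according to~\eqref{eq:waitcdf} with the fixed-point parameters $(\lambda_s^{\rm q},\mu_s,c_s)$. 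This is a PASTA-type stationary-arrival argument, and I would state it explicitly as the point where the $M/M/c_s$ abstraction enters.

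Given these conditions, the deterministic part $\xi r^\alpha+T^{\rm pf}_{s,q}$ is known. The conditional coverage is then $\Prb[W_s\le \tau-\xi r^\alpha-T^{\rm pf}_{s,q}]=F_{W_s}(\tau-\xi r^\alpha-T^{\rm pf}_{s,q})$, with $F_{W_s}(w)=0$ for $w<0$.

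\textbf{Step 3: truncate the integration range.} The argument of $F_{W_s}$ is nonnegative if and only if $\xi r^\alpha\le\tau-T^{\rm pf}_{s,q}$, that is, $r\le r^{\max}_{s,q}(\tau)$. The positive part in the definition handles the case $\tau<T^{\rm pf}_{s,q}$, in which the range is empty. Integrating $f^{\rm sel}_{s,q}(r)F_{W_s}(\cdot)$ over $[0,r^{\max}_{s,q}]$ and summing over $s$ and $q$ yields~\eqref{eq:ttftcoverage}.

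\textbf{Main obstacle.} I expect the main difficulty to be the justification of Step 2's independence between the realized queue state and the spatial selection event. In a genuine network, the load at the selected node is correlated with the size of its association cell, so this independence does not hold exactly. I would handle it as the paper does: treat each tier's queue as a typical $M/M/c_s$ system with mean-field arrival rate~\eqref{eq:arrival}, and note that routing uses only the advertised means. Under that mean-field convention, $W_s$ is independent of $(R_s,\{R_t\})$ by construction. The remaining steps are routine Fubini and total-probability manipulations.
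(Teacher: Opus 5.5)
Your proposal is correct and follows essentially the same route as the paper: condition on the request type $q$, the selected profile $s$, and the serving distance through the joint density $f^{\rm sel}_{s,q}(r)$. You then replace the conditional success probability by $F_{W_s}(\tau-\xi r^\alpha-T^{\rm pf}_{s,q})$ and truncate at $r^{\max}_{s,q}(\tau)$, where the argument becomes negative. The only difference is that you state explicitly, as a mean-field modeling assumption, that the realized wait $W_s$ is independent of the geometry and of the selection event; the paper's proof (and Appendix~B) uses this only implicitly when it equates the conditional success probability with the waiting-time CDF.
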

\begin{proof}
Condition on request type $q$, selected profile $s$, and serving distance $r$. The event $T_{s,q}\le\tau$ is equivalent to
\begin{equation}
W_s\le\tau-\xi r^\alpha-T^{\rm pf}_{s,q}.
\end{equation}
The right-hand side is nonnegative only for $r\le r^{\max}_{s,q}(\tau)$. Multiplying the waiting-time CDF by the joint selection-distance density~\eqref{eq:fsel}, integrating, and averaging over $q$ and $s$ proves the result. Appendix~B expands the conditioning steps and the truncation radius argument.
\end{proof}

Equation~\eqref{eq:ttftcoverage} is the LLM-inference analogue of wireless coverage probability. It simultaneously includes spatial availability, partial computation reuse, and queue reliability.

The coverage probability is the primary service-level metric, while the mean, quantile, and token-level reuse metrics reveal which component of latency changes and how much computation is actually avoided.

The mean TTFT follows by conditioning on request type, selected profile, and serving distance:
\begin{align}
\bar T^{\rm TTFT}
=&\sum_qp_q\sum_s\int_0^\infty f^{\rm sel}_{s,q}(r)\nonumber\\
&\times\left(\xi r^\alpha+\bar W_s+T^{\rm pf}_{s,q}\right)dr.
\label{eq:meanttft}
\end{align}
Equation~\eqref{eq:meanttft} uses the same mean delay employed for association. The coverage expression~\eqref{eq:ttftcoverage} is more informative for service-level objectives because it retains the waiting-time tail.

A queue-free upper bound is obtained by replacing $F_{W_s}(w)$ with one in~\eqref{eq:ttftcoverage}. A conservative no-wait lower approximation for planning can instead reduce the target by a chosen queue-delay quantile. These bounds separate failures caused by spatial communication and prefill from those caused by congestion.

For reliability target $\zeta$, define the TTFT quantile
\begin{equation}
\tau_\zeta=\inf\\{\tau:P_{\rm cov}^{\rm TTFT}(\tau)\ge\zeta\\}.
\label{eq:ttftquantile}
\end{equation}
This quantity is directly compatible with percentile-based serving objectives.

The expected reusable-prefix length is
\begin{equation}
\bar H=\sum_qp_q\sum_sA_{s,q}H_{s,q}.
\end{equation}
We define the normalized computation-offloading ratio
\begin{equation}
\eta_{\rm off}=\sum_qp_q\sum_sA_{s,q}\frac{H_{s,q}}{L_q}.
\label{eq:offloading}
\end{equation}
Unlike a binary file-hit metric, $\eta_{\rm off}$ measures the fraction of prompt computation avoided by the selected spatial cache.

\subsection{Structural Design Insights}
The previous results can be summarized by a simple pairwise comparison: a longer path is justified only when the non-spatial latency saving is larger than the additional communication cost.

\begin{proposition}[Remote-prefix selection condition]
Consider candidate profiles $s$ and $t$ at distances $r_s$ and $r_t$. Profile $t$ is preferred to profile $s$ if and only if
\begin{equation}
\xi(r_t^\alpha-r_s^\alpha)<\Delta_{s,q}-\Delta_{t,q}.
\label{eq:remotecondition}
\end{equation}
\end{proposition}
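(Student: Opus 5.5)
The proof is a direct unpacking of the association rule~\eqref{eq:association} restricted to the two candidates. "Profile $t$ is preferred to profile $s$" means that, under the predicted-TTFT cost~\eqref{eq:cost}, the candidate of profile $t$ has strictly smaller cost:
\begin{equation*}
J_{t,q}(r_t)<J_{s,q}(r_s).
\end{equation*}
Strictness is the natural reading, since ties have probability zero by the argument in the proof of Theorem~1. I will fix this reading first, so that the ``if and only if'' has a precise meaning.

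Next I substitute~\eqref{eq:cost} and the definition $\Delta_{s,q}=\bar W_s+T^{\rm pf}_{s,q}$. This gives
\begin{equation*}
J_{s,q}(r_s)=\xi r_s^\alpha+\Delta_{s,q},\qquad J_{t,q}(r_t)=\xi r_t^\alpha+\Delta_{t,q}.
\end{equation*}
The inequality $J_{t,q}(r_t)<J_{s,q}(r_s)$ is therefore
\begin{equation*}
\xi r_t^\alpha+\Delta_{t,q}<\xi r_s^\alpha+\Delta_{s,q}.
\end{equation*}
Adding $-\xi r_s^\alpha-\Delta_{t,q}$ to both sides is a reversible operation on real numbers. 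It yields exactly~\eqref{eq:remotecondition}, and each step is an equivalence, which gives both directions at once.

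The only point needing care is interpretive rather than computational. The statement compares two candidates in isolation. In the full policy, $t$ is actually \emph{selected} only if it also beats every other profile, so I will state that the proposition concerns the pairwise preference order induced by $J_{\cdot,q}$.

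I will also connect the result to the guard radius~\eqref{eq:guard}. For $\xi>0$, condition~\eqref{eq:remotecondition} is equivalent to
\begin{equation*}
r_t<G_{s,t,q}(r_s),
\end{equation*}
whenever the bracket in~\eqref{eq:guard} is positive. When the bracket is nonpositive, no $r_t\ge 0$ satisfies the condition. This confirms consistency with the event used in Theorem~1.

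Finally, I will remark on the interpretation. A farther node, with $r_t>r_s$, is preferred precisely when its non-spatial saving $\Delta_{s,q}-\Delta_{t,q}$ exceeds the extra communication cost. That saving combines residual-prefill reduction through $H_{t,q}>H_{s,q}$ with any queueing difference. I do not expect a genuine obstacle here; the main risk is only ambiguity in the meaning of ``preferred,'' which the first step resolves.
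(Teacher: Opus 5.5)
Your proposal is correct and matches the paper's proof: read ``preferred'' as the strict cost inequality $\xi r_t^\alpha+\Delta_{t,q}<\xi r_s^\alpha+\Delta_{s,q}$ from \eqref{eq:cost}, then rearrange through reversible steps. Your extra remarks are accurate but not needed for the argument: the preference is pairwise rather than global selection, and the condition is equivalent to $r_t<G_{s,t,q}(r_s)$.
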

\begin{proof}
Profile $t$ is selected over $s$ exactly when
\begin{equation}
\xi r_t^\alpha+\Delta_{t,q}<\xi r_s^\alpha+\Delta_{s,q}.
\end{equation}
Moving the communication terms to the left and the non-spatial terms to the right gives~\eqref{eq:remotecondition}. Equality defines the association boundary. Expanding $\Delta_{s,q}-\Delta_{t,q}$ further separates the prefill saving of $t$ from its queue-delay disadvantage, which explains why a deeper but congested cache may cease to be attractive.
\end{proof}

When $t$ provides a deeper prefix, the right-hand side includes its prefill saving, offset by any additional queueing delay. Long prompts and expensive prefill enlarge the region in which a remote cache is beneficial.

Fig.~\ref{fig:remoteregion} visualizes~\eqref{eq:remotecondition} for two candidate nodes. The boundary is curved because radio delay is superlinear in distance. A deep-prefix node can remain preferable even when it is substantially farther away, but the admissible distance advantage collapses as queue delay at that node rises.

\begin{figure}[t]
\centering
\includegraphics[width=\columnwidth]{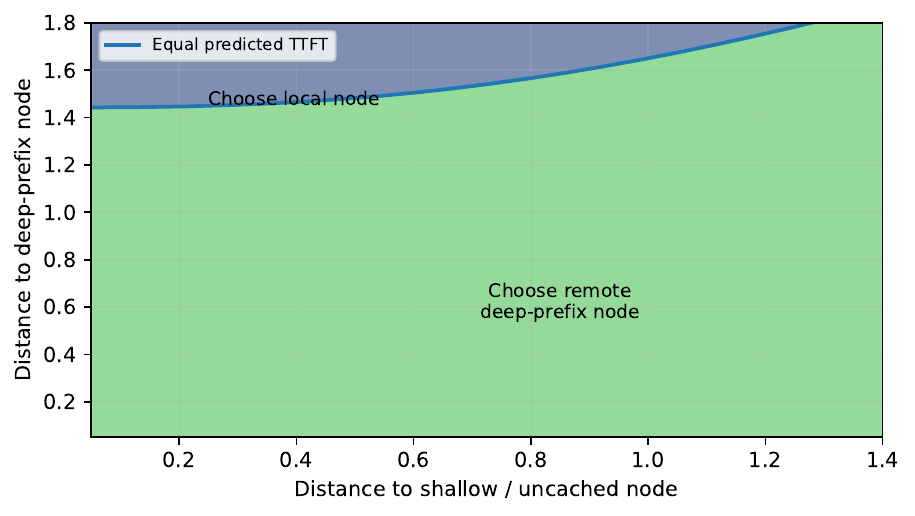}
\caption{Illustrative selection region for a local shallow-cache node and a remote deep-prefix node. The remote node is selected when its non-spatial latency saving exceeds the additional communication delay.}
\label{fig:remoteregion}
\end{figure}

This condition links the entire analysis. A deeper prefix lowers residual prefill, but its profile may be sparse and congested; replicating it reduces distance and load, but consumes more persistent memory; preserving uncached or shallow profiles sacrifices some reuse while protecting concurrency. The optimal design is therefore hierarchical rather than uniformly cache-maximizing: broad shallow reuse, selective deep specialization, and sufficient high-concurrency fallback capacity.
\section{Cache-Profile Construction and Placement Optimization}
The analytical model assumes a finite set of cache profiles, whereas a real workload produces a much larger prefix forest. This section first compresses that forest into a tractable candidate set, then optimizes how frequently each profile appears in space, and finally describes the nested numerical procedure used to evaluate a candidate placement.

\subsection{Trace-Driven Cache-Profile Construction}
The profile set $\mathcal{S}$ should be much smaller than the power set of prefix vertices. We use a three-stage construction procedure.

\begin{enumerate}
\item \emph{Trace aggregation:} tokenize requests with the deployed model tokenizer and insert them into a radix forest. Each vertex records request count, reusable depth, state size, and lifetime.
\item \emph{Value scoring:} assign vertex $v$ a score such as
\begin{equation}
\Gamma_v=\frac{\widehat p_v\\,[g(L_v)-g(L_v-\ell_v)]}{B_v},
\label{eq:profilescore}
\end{equation}
which estimates saved prefill per persistent byte.
\item \emph{Profile generation:} create a limited number of ancestor-closed subtrees using greedy expansion, clustering by request family, or a tree knapsack algorithm. Include an uncached/high-concurrency profile as a fallback.
\end{enumerate}

A profile set can be recomputed at a slow time scale, while the placement probabilities are optimized at an intermediate time scale and request routing operates online.

\begin{lemma}[Dominated-profile pruning]
Under node-aware routing, profile $s$ is weakly dominated by profile $t$ if $B_t\le B_s$, $H_{t,q}\ge H_{s,q}$ for all $q$, and $c_t\ge c_s$. Removing $s$ from the candidate configuration set cannot decrease the optimum of~\eqref{prob:p1}.
\end{lemma}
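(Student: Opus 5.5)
The plan is an exchange argument on the placement vector. Take any feasible $\boldsymbol\pi$ with $\pi_s>0$ and build $\boldsymbol\pi'$ by moving all of profile $s$'s mass onto $t$: set $\pi'_t=\pi_t+\pi_s$, $\pi'_s=0$, and leave every other coordinate unchanged. It then suffices to show two things: $\boldsymbol\pi'$ is feasible for~\eqref{prob:p1}, and its objective (TTFT coverage, or mean TTFT, whichever~\eqref{prob:p1} uses) is no worse than that of $\boldsymbol\pi$. Then every feasible point that uses $s$ is matched by a point that does not, so restricting to $\mathcal{S}\setminus\{s\}$ cannot lower the optimum. The phrase ``node-aware routing'' is where the relabeling gets its force: a request facing a node that now carries profile $t$ can route exactly as it did when that node carried $s$.

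\textbf{Step 1 (feasibility).} The average static memory is $\sum_s\pi_sB_s$. Moving mass from $B_s$ to $B_t\le B_s$ cannot increase it, so the memory constraint still holds. Since $c_t\ge c_s\ge1$, the profile-$t$ node is itself feasible.

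\textbf{Step 2 (coupling).} Couple the two networks realization by realization: same node locations, same user, and every node marked $s$ is relabeled $t$. Fix the routing decisions of the original network. Any request sent to a relabeled node now has residual prefill $g(L_q-H_{t,q})\le g(L_q-H_{s,q})$, because $g$ is nondecreasing. So its service time $S_{t,q}\le S_{s,q}$ and the node's service rate $\mu$ does not decrease. With the same arrival stream, the same or faster service, and $c_t\ge c_s$ servers, the $M/M/c$ queue at that node is stochastically no more congested. Erlang-C delay is monotone in $c$ and in $\mu$: the waiting-time tail $P^{\rm wait}e^{-(c\mu-\lambda)w}$ shrinks when $c\mu$ grows. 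This means $\rho$ decreases, stability is preserved, and the waiting-time distribution is stochastically dominated by the old one. Communication distance is unchanged. Hence, under this fixed (possibly suboptimal) routing, every request's TTFT is pointwise no larger, and coverage is no smaller.

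\textbf{Step 3 (re-optimized routing).} The routing used in $\boldsymbol\pi'$ is the load-aware fixed-point rule, not the frozen routing of Step 2. I would argue that the optimum of~\eqref{prob:p1} is taken over placements together with the routing the controller may choose. Under node-aware routing the frozen routing is one admissible choice, so the optimal value at $\boldsymbol\pi'$ is at least the value achieved in Step 2.

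\textbf{Main obstacle.} Step 3 is where I expect the difficulty. If the objective is evaluated only at the equilibrium of~\eqref{eq:damped}, merging tiers changes $\lambda_t$ and the association regions. A selfish fixed point need not be monotone: this is a Braess-type risk, where an improved tier attracts traffic and then congests. I would handle this by reading ``node-aware routing'' as the routing controller being allowed to keep relabeled nodes' traffic assignments. That makes the weak-domination claim an optimization-level (system-optimal) statement, and the coupling above then completes the proof. I would also remark that a stronger equilibrium version would need extra monotonicity assumptions.
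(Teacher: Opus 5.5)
Your proposal is correct and follows essentially the same route as the paper. You relabel every profile-$s$ node as profile $t$ while keeping its physical identity, and let the node-aware controller reproduce the original routing decisions. Persistent memory, concurrency, reusable depth, distance, and queue workload are then all no worse, so the transformed placement is feasible with no smaller objective.

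Your Erlang-C monotonicity check and your Step~3 caveat make precise what the paper leaves implicit in the phrase ``a node-aware controller can reproduce every routing decision.'' The check needs $P^{\rm wait}$ to decrease in $c$ and increase in $a$ (standard Erlang-C facts), not only the growth of $c\mu$. The caveat is that the claim concerns controller-reproduced routing, not a re-solved selfish fixed point.
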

\begin{proof}
Take any feasible deployment and routing policy that uses profile $s$. Reconfigure each profile-$s$ node with profile $t$ while retaining the physical node identity. The reconfiguration consumes no additional persistent memory, preserves at least the same concurrency, and provides no smaller reusable depth for any request type. A node-aware controller can reproduce every routing decision of the original policy by sending a request to the same physical node and, if desired, ignoring the extra cached state. Therefore the transformed deployment has no larger communication distance, residual prefill delay, or queue workload than the original benchmark. It is feasible and achieves an objective value no smaller. Hence some optimum assigns zero probability to the dominated configuration. Appendix~D gives the constructive argument in detail.
\end{proof}

\subsection{Joint Spatial Placement Problem}
The candidate profiles define what can be stored; the remaining decision is how densely each profile should be deployed. Because placement changes both spatial availability and queue capacity, the objective must be evaluated at the load--association fixed point.

Let $\boldsymbol{\pi}=(\pi_s:s\in\mathcal{S})$. For target $\tau_0$, we optimize
\begin{subequations}\label{prob:p1}
\begin{align}
\max_{\boldsymbol{\pi}}\quad &P_{\rm cov}^{\rm TTFT}(\tau_0)\\
\mathrm{s.t.}\quad &\sum_s\pi_s=1,\quad \pi_s\ge0,\\
&\sum_s\pi_sB_s\le\bar B,\\
&B_s<M,\quad \forall s:\pi_s>0,\\
&\rho_s(\boldsymbol{\pi})<1,\quad \forall s:\pi_s>0.
\end{align}
\end{subequations}
The average budget $\bar B$ can model a fleet-wide memory reservation or a cost constraint. The problem is nonconvex because placement changes spatial density, request association, request mixture, service rate, concurrency, and queue stability.

\subsection{Load-Aware Solution Algorithm and Network Planning}
Algorithm~\ref{alg:solver} uses an inner load-association fixed point and an outer projected simultaneous-perturbation search. Other derivative-free methods can replace the outer loop.

\begin{algorithm}[t]
\caption{Load-Aware Cache-Profile Placement}
\label{alg:solver}
\begin{algorithmic}[1]
\Require Profile set $\mathcal{S}$, workload $\{p_q,L_q\}$, memory $M$, budget $\bar B$, target $\tau_0$.
\State Initialize feasible $\boldsymbol{\pi}^{(0)}$.
\For{$n=0,1,\ldots$}
  \State Set $\lambda_s=\lambda_{\rm b}\pi_s^{(n)}$ and $c_s$ using~\eqref{eq:concurrency}.
  \State Solve~\eqref{eq:fixedpoint} using~\eqref{eq:damped}.
  \If{any active profile is unstable}
     \State Assign a large penalty to the objective.
  \Else
     \State Evaluate~\eqref{eq:ttftcoverage}.
  \EndIf
  \State Estimate a search direction $\widehat{\nabla}P_{\rm cov}^{\rm TTFT}$ by simultaneous perturbation.
  \State $\widetilde{\boldsymbol{\pi}}\gets\boldsymbol{\pi}^{(n)}+\gamma_n\widehat{\nabla}P_{\rm cov}^{\rm TTFT}$.
  \State Project $\widetilde{\boldsymbol{\pi}}$ onto the simplex and memory-budget set to obtain $\boldsymbol{\pi}^{(n+1)}$.
  \If{objective improvement $<\epsilon$} \State \textbf{break} \EndIf
\EndFor
\end{algorithmic}
\end{algorithm}

With $N_r$ quadrature points, one fixed-point iteration requires $O(|\mathcal{Q}||\mathcal{S}|^2N_r)$ operations. Candidate-profile generation should therefore prune dominated profiles: profile $s$ is dominated if another profile has no larger memory, no smaller hit depth for any request, and no lower concurrency.

The outer search must remain inside the simplex, memory budget, and stability region. Projection and profile pruning provide a practical implementation, while the same solver can be inverted to size the edge deployment itself.

The outer feasible set is the intersection of a probability simplex and a weighted memory half-space. For a trial vector $\mathbf z$, projection solves
\begin{equation}
\min_{\boldsymbol{\pi}}\frac12\\|\boldsymbol{\pi}-\mathbf z\\|_2^2
\quad\text{s.t.}\quad
\boldsymbol{1}^{\mathsf T}\boldsymbol{\pi}=1,\\
\boldsymbol{B}^{\mathsf T}\boldsymbol{\pi}\le\bar B,\\
\boldsymbol{\pi}\succeq0.
\label{eq:projection}
\end{equation}
For tens of profiles, this small convex subproblem is negligible relative to fixed-point evaluation. For a small profile set, simplex grid search gives a reproducible global benchmark. For a large trace-derived profile set, simultaneous perturbation or Bayesian optimization can be combined with pruning and warm-started fixed points.

The framework also supports two operational constraints that are useful in practice. A minimum fallback fraction $\pi_0\ge\pi_{\min}$ preserves high-concurrency nodes, while a maximum deep-cache fraction prevents a specialized tier from receiving too little spatial density. These constraints can be added linearly to~\eqref{prob:p1}.

The same framework can solve the dual planning problem
\begin{equation}
\min_{\lambda_{\rm b},\boldsymbol{\pi}}\lambda_{\rm b}
\quad\text{s.t.}\quad
P_{\rm cov}^{\rm TTFT}(\tau_0)\ge\zeta,
\label{eq:densityplanning}
\end{equation}
which gives the minimum edge-GPU density required for a TTFT reliability target $\zeta$.

\section{Numerical Results}
The numerical study has two roles. The first is verification: the spatial integrals, queue tail, and fixed-point solver should reproduce direct simulation under the same assumptions. The second is interpretation: once verified, the model is used to isolate the communication--computation--queueing mechanisms that a cache-hit ratio alone cannot reveal.

\subsection{Setup and Analytical Validation}
We evaluate the analytical framework and verify its principal distributions by Monte Carlo or event-driven simulation. The experiments are intentionally normalized: their purpose is to establish structural laws and validate the derivations, not to claim a hardware-specific service latency. The source package includes the figure generator, quadrature routines, fixed-point solver, and random seeds. A deployment-oriented follow-up should replace $g(n)$, $B_s$, $m^{\rm run}$, and $D_q$ with measurements from a selected model, accelerator, and serving engine.

The spatial simulation does not truncate a PPP to a finite window. For each profile $s$, it samples the exact nearest-distance distribution using
\begin{equation}
R_s=\sqrt{\frac{E_s}{\pi\lambda_s}},\qquad E_s\sim\mathrm{Exp}(1),
\label{eq:directsample}
\end{equation}
which removes boundary bias. Queue validation uses an event-driven FCFS multi-server simulator with exponential interarrival and service times.

\begin{table}[t]
\caption{Baseline Numerical Parameters}
\label{tab:params}
\centering
\scriptsize
\begin{tabular}{ll}
\toprule
Parameter & Value \\
\midrule
Edge density $\lambda_{\rm b}$ & $1.2$ normalized nodes/km$^2$ \\
Request intensity $\Lambda_{\rm u}$ & $3$ requests/(km$^2\cdot$s) \\
Communication model & $\xi=0.28$, $\alpha=2.5$ \\
Request lengths & $L_1=1024$, $L_2=4096$ tokens \\
Request probabilities & $p_1=0.4$, $p_2=0.6$ \\
Profile probabilities & $(0.45,0.35,0.20)$ \\
Reusable depths & $(0,512,2048)$ tokens \\
Concurrency & $(18,14,6)$ requests \\
Prefill model & $a=1.6\times10^{-4}$, $b=1.8\times10^{-8}$ \\
Decode occupancy & $D_q=0.4$ s \\
Fixed-point damping & $\eta=0.3$ \\
\bottomrule
\end{tabular}
\end{table}

We compare: (i) nearest-node routing, which ignores prefix and queue state; (ii) prefix-aware but queue-unaware routing; (iii) the proposed load-aware policy; and, for placement experiments, (iv) fixed uniform profile mixing versus a constrained profile search.

We begin with the spatial law, then validate the queue-delay tail, and finally verify that the two components converge when coupled through the advertised delay.

Fig.~\ref{fig:validation} compares Theorem~1 with $2\times10^5$ Monte Carlo samples. The analytical and simulated bars are visually indistinguishable, with maximum absolute error below $10^{-3}$. The $4096$-token request strongly prefers the deep-prefix profile because its prefill saving justifies a longer path. The $1024$-token request distributes more traffic to shallow and uncached profiles.

\begin{figure}[t]
\centering
\includegraphics[width=\columnwidth]{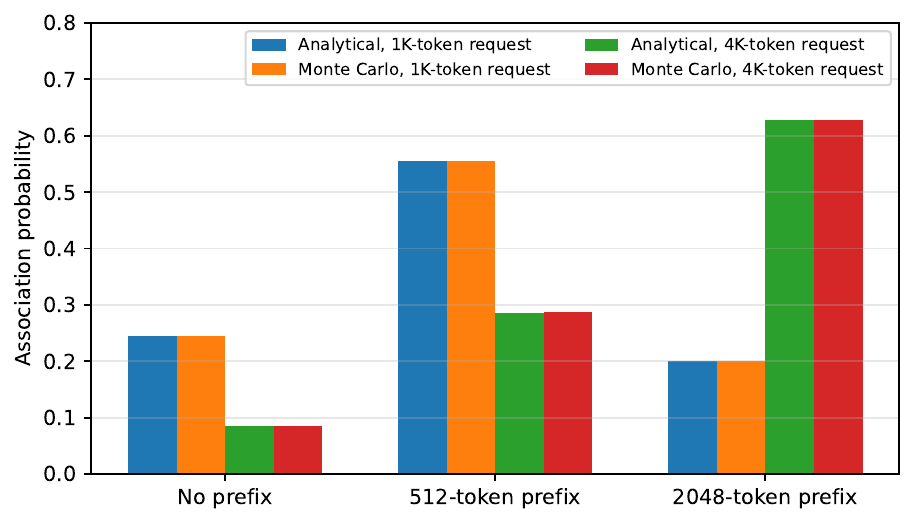}
\caption{Analytical and Monte Carlo profile-association probabilities.}
\label{fig:validation}
\end{figure}

Fig.~\ref{fig:waitcdf} compares~\eqref{eq:waitcdf} with an event-driven $M/M/9$ queue at utilization $\rho\approx0.82$. Agreement verifies the queue-tail component used in TTFT coverage. This validation does not claim that production continuous batching is exactly Markovian; it confirms that the analysis and simulator implement the same calibrated abstraction.

\begin{figure}[t]
\centering
\includegraphics[width=\columnwidth]{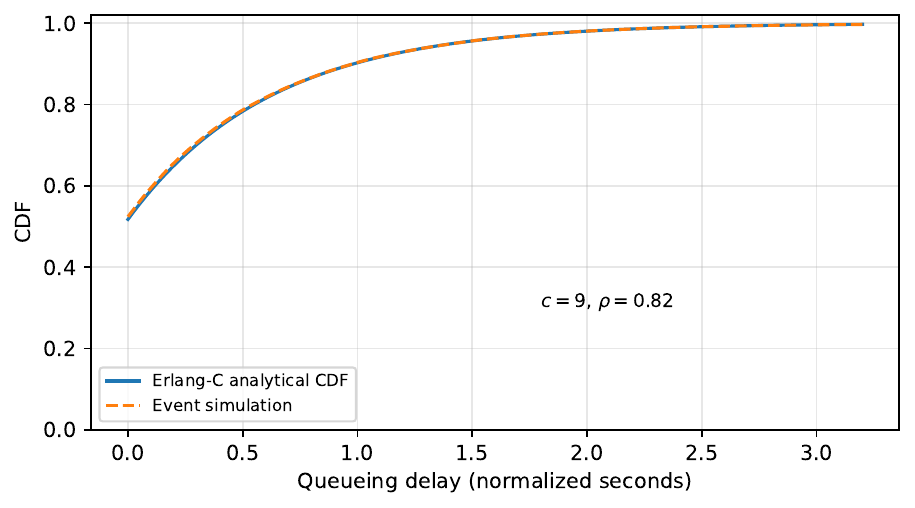}
\caption{Analytical Erlang-C waiting-time CDF and event-driven queue simulation.}
\label{fig:waitcdf}
\end{figure}

Fig.~\ref{fig:fixedpoint} shows convergence of profile-specific mean waiting times under damping. The deep-prefix tier initially attracts a large fraction of long requests and develops the largest delay. Its advertised cost then increases, shifting part of the demand to shallower profiles. The iteration settles after a small number of effective updates and remains stable under warm starts across neighboring load points.

\begin{figure}[t]
\centering
\includegraphics[width=\columnwidth]{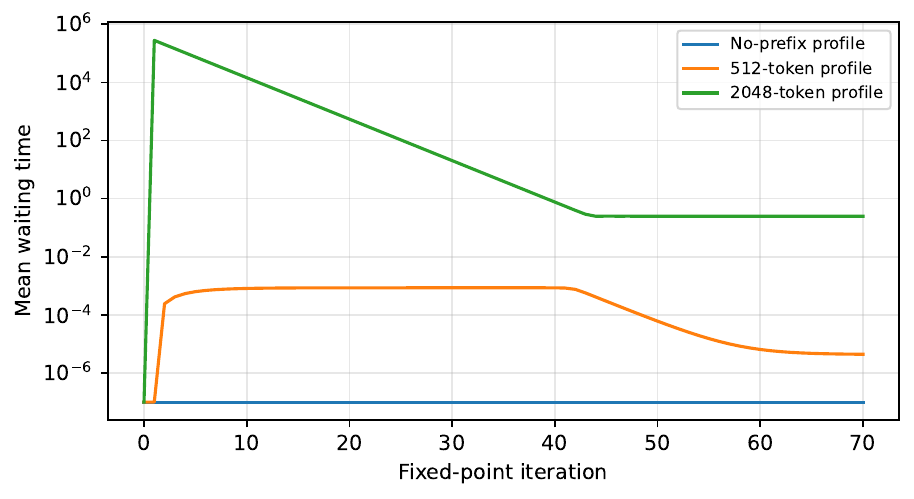}
\caption{Convergence of the damped load--association fixed point.}
\label{fig:fixedpoint}
\end{figure}

\subsection{Communication--Computation--Queueing Tradeoff}
With the analytical machinery validated, we next vary the two mechanisms that most directly change online routing: persistent prefix depth and offered load.

Fig.~\ref{fig:memory} varies a uniform cached-prefix depth and associates a decreasing concurrency with greater persistent memory. At light load, deeper reuse initially improves TTFT coverage. At moderate and heavy load, the optimum moves toward shallower caching because runtime concurrency becomes the bottleneck. Past the optimum, additional static cache rapidly increases queue delay or causes instability. This non-monotonicity is the central distinction between GPU computation-state caching and storage-only content caching.

\begin{figure}[t]
\centering
\includegraphics[width=\columnwidth]{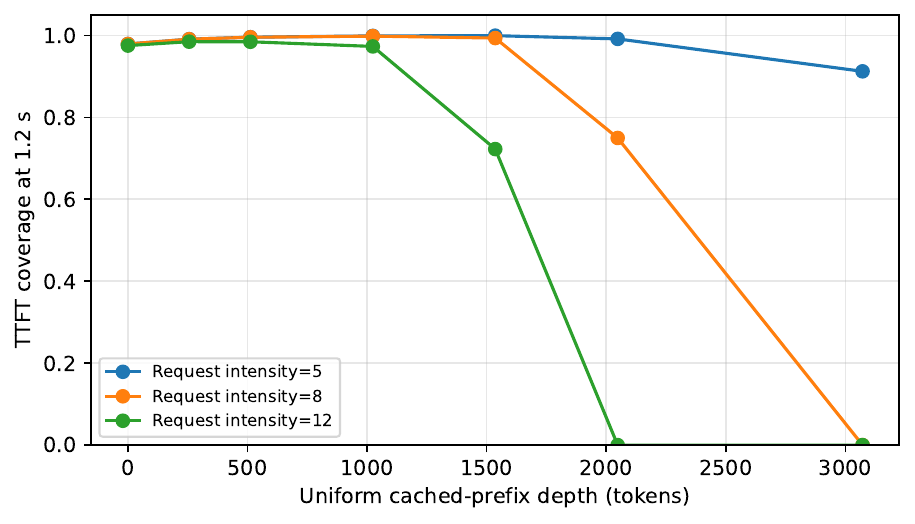}
\caption{TTFT coverage versus uniform cached-prefix depth at three traffic intensities.}
\label{fig:memory}
\end{figure}

The same memory allocation can appear beneficial at low load and harmful at high load because deeper caching changes not only prefill time but also the traffic concentration and concurrency of the selected tier.

Fig.~\ref{fig:load} plots TTFT coverage at $0.8$~s versus spatial request intensity. Under light load, queue-unaware and load-aware prefix routing are almost identical. At intensity $3$, the proposed policy achieves approximately $0.69$ coverage compared with $0.55$ for queue-unaware routing. At intensity $4$, the gap widens to approximately $0.63$ versus $0.23$. Nearest-node routing remains around $0.52$: it avoids a concentrated cache hotspot but leaves large prefill savings unused. Thus, cache awareness without load awareness can perform worse than ignoring cache affinity.

\begin{figure}[t]
\centering
\includegraphics[width=\columnwidth]{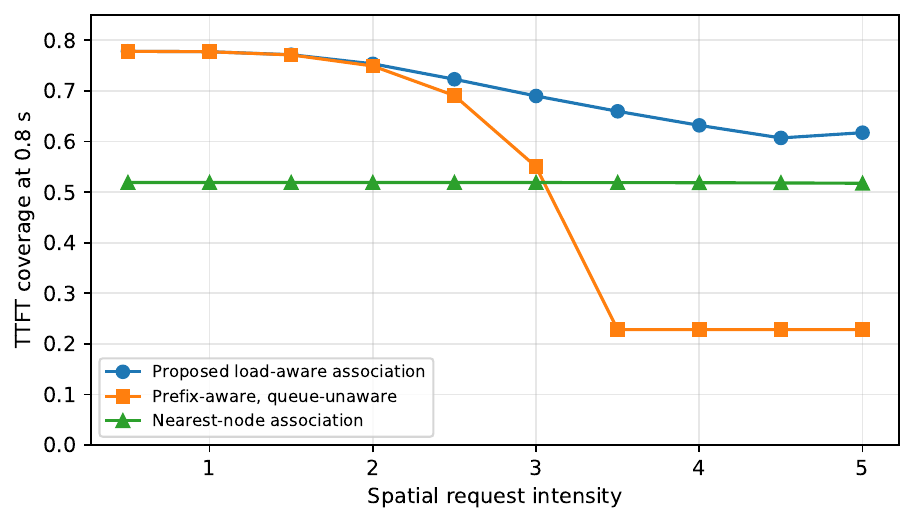}
\caption{TTFT coverage versus spatial request intensity.}
\label{fig:load}
\end{figure}

\subsection{Memory and Placement Design}
The next experiments move from per-request routing to network design. They show how workload composition changes the useful mix of shallow, deep, and fallback profiles.

Fig.~\ref{fig:popularity} increases the probability of the long request while keeping the profile mix fixed. The long request places stronger demand on the deep-prefix tier. As its popularity increases, the router expands use of high-concurrency fallback profiles to control queues. The normalized computation-offloading ratio decreases because deep-prefix capacity does not scale with demand, and mean serving distance changes as the association balance shifts. The experiment demonstrates that popularity skew affects not only cache-hit opportunity but also the load carried by each spatial tier.

\begin{figure}[t]
\centering
\includegraphics[width=\columnwidth]{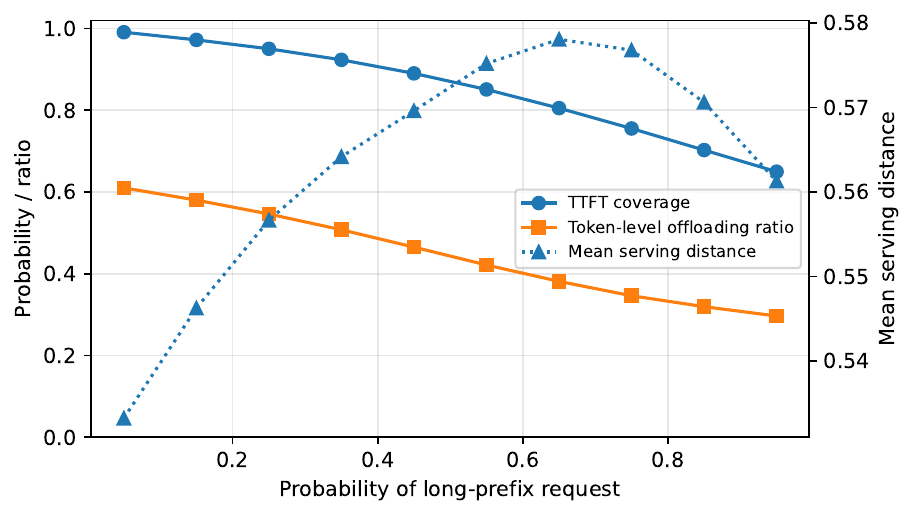}
\caption{Sensitivity to the probability of a long-prefix request.}
\label{fig:popularity}
\end{figure}

Fig.~\ref{fig:profilemix} reports the best profile mix among a feasible candidate set. At light and moderate load, the selected design retains a $20\%$ deep-prefix tier because prefill savings dominate. At high load, the selected deep-prefix fraction falls to $15\%$ and the shallow tier expands to $50\%$, preserving more concurrency while maintaining broad reuse. Fig.~\ref{fig:placementgain} shows that adapting the mix prevents the sharp degradation of a fixed placement near heavy load.

\begin{figure}[t]
\centering
\includegraphics[width=\columnwidth]{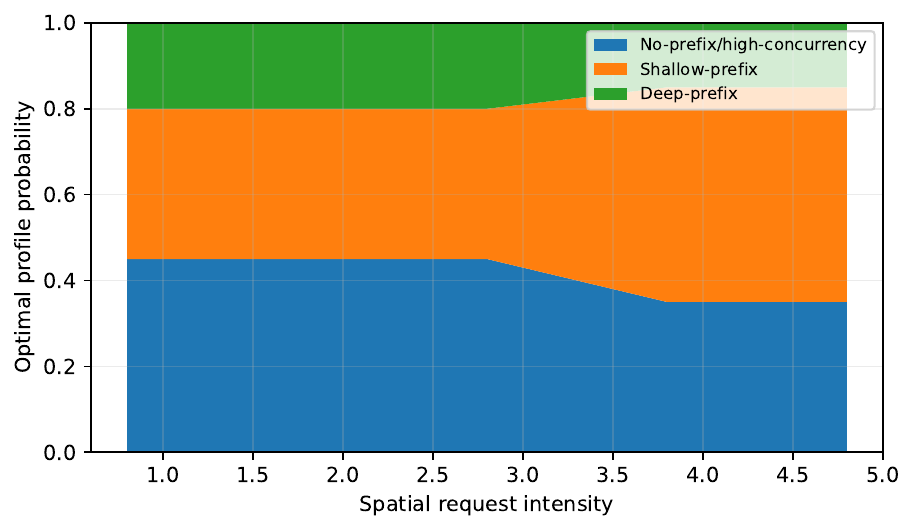}
\caption{Load-dependent cache-profile mix selected from a memory-feasible candidate set.}
\label{fig:profilemix}
\end{figure}

\begin{figure}[t]
\centering
\includegraphics[width=\columnwidth]{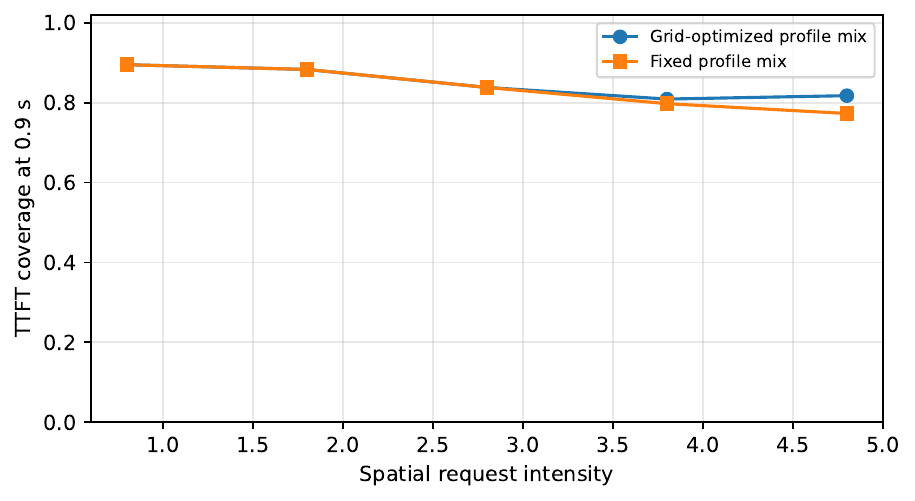}
\caption{TTFT-coverage gain from adapting the cache-profile distribution to load.}
\label{fig:placementgain}
\end{figure}

\subsection{Network Planning and Main Takeaways}
Finally, we vary edge-GPU density to separate the value of additional infrastructure from the value of cache diversity.

Fig.~\ref{fig:density} varies edge-GPU density at fixed spatial demand. Increasing density shortens communication distance, increases the spatial availability of every profile, and reduces request arrivals per node. Proposed coverage rises from approximately $0.48$ at density $0.5$ to $0.94$ at density $3$. The nearest-node baseline saturates near $0.52$ because density alone does not ensure a useful profile match. This result supports joint density and profile planning rather than treating GPU deployment and cache allocation as independent stages.

\begin{figure}[t]
\centering
\includegraphics[width=\columnwidth]{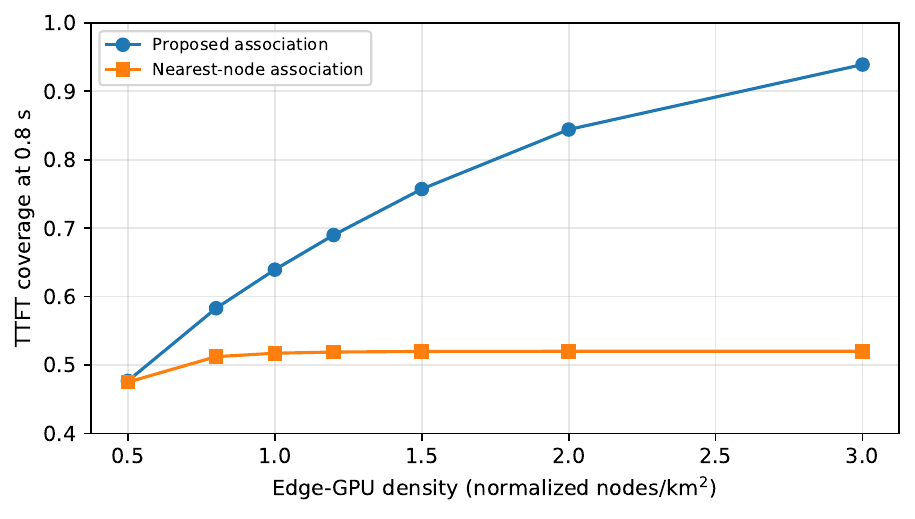}
\caption{TTFT coverage versus edge-GPU density.}
\label{fig:density}
\end{figure}

The experiments support four mechanisms rather than a single cache-hit explanation. First, computation reuse creates a request-dependent spatial bias. Second, queue feedback is necessary because cache affinity creates hotspots. Third, persistent memory has an opportunity cost in runtime concurrency. Fourth, deployment density and profile diversity interact: a specialized profile is useful only if it is spatially available often enough and provisioned with enough aggregate service capacity.

\begingroup\small
\begingroup\small
\section{Deployment Architecture and Practical Considerations}
The analytical framework is intended to guide a system rather than replace one. A practical design separates slow trace analysis and hardware calibration from periodic state placement and per-request routing. The same decomposition also clarifies which baseline assumptions can be replaced by measurements without changing the spatial logic.

\subsection{Three-Time-Scale Deployment Architecture}
The analytical model can be implemented as a hierarchical controller rather than a single monolithic optimization. Fig.~\ref{fig:deploymentloop} maps the mathematical quantities to three operational time scales. At the slowest time scale, prompt traces are tokenized, aggregated into a prefix forest, and combined with hardware profiles of prefill latency, persistent-state size, and runtime concurrency. This stage constructs and prunes the candidate profile set and can run every several hours or days. At an intermediate time scale, the controller updates regional profile probabilities, migrates selected states, publishes a cache catalogue, and recomputes the load--association fixed point. This stage responds to diurnal popularity shifts, changes in GPU availability, and measured memory pressure. At the fastest time scale, every request is classified, matched against the catalogue, and routed using the communication--prefill--queue cost in~\eqref{eq:cost}.

\begin{figure}[t]
\centering
\includegraphics[width=\columnwidth]{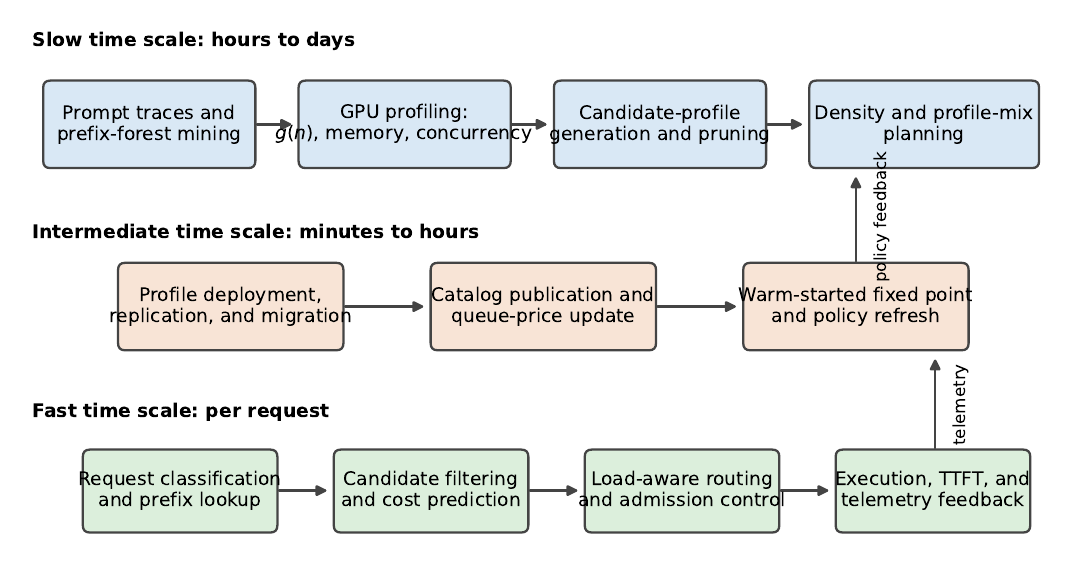}
\caption{Three-time-scale deployment architecture. Trace mining and hardware profiling operate slowly, profile placement and catalogue updates operate periodically, and association and admission decisions are made per request.}
\label{fig:deploymentloop}
\end{figure}

This separation is important for scalability. The outer profile-placement problem need not be solved for every request, while the online router need not inspect raw prompt traces or rerun GPU profiling. It only requires a compact state vector for each active profile tier. The fixed-point solution also provides a natural warm start for adjacent traffic epochs, because arrival rates and queue prices generally change gradually.

The slow, intermediate, and online loops are connected through a compact catalogue and telemetry interface. The remaining mechanisms are best understood as stages in the lifecycle of a request and a cached state rather than as independent modules.

\noindent\textbf{Online execution and robustness.}
A deployable router requires more than the cache-profile identifier. Each edge node, or a regional controller representing a group of statistically equivalent nodes, can periodically advertise
\begin{equation}
\mathcal{Z}_x=\{S_x,e_x,c_x,\widehat W_x,\widehat \rho_x,\widehat \mu_x,t_x^{\rm upd}\},
\label{eq:advertisement}
\end{equation}
where $e_x$ is the cache-catalogue epoch, $c_x$ is the admitted concurrency, $\widehat W_x$ is the estimated queue delay, $\widehat \rho_x$ is utilization, $\widehat \mu_x$ is effective service rate, and $t_x^{\rm upd}$ is the update time. The request carries only a prefix identifier or secure digest and its uncached suffix length; the router need not receive plaintext prompt content.

For node-level routing, the estimated cost can be written as
\begin{equation}
\widehat J_{x,q}=\widehat T_x^{\rm com}+g(L_q-\widehat H_{x,q})+\widehat W_x
+\widehat T_{x,q}^{\rm fetch}+\psi_x,
\label{eq:practicalcost}
\end{equation}
where $\widehat T_{x,q}^{\rm fetch}$ is zero for a local persistent state and positive for disaggregated KV retrieval. The margin $\psi_x$ accounts for stale telemetry, catalogue uncertainty, migration in progress, or prediction error. A simple robust choice is
\begin{equation}
\psi_x=z_\delta\sqrt{\sigma^2_{{\rm com},x}+\sigma^2_{W,x}+\sigma^2_{{\rm pf},x}},
\end{equation}
where $z_\delta$ is selected for a desired confidence level. This converts mean-cost routing into a risk-aware policy suitable for percentile TTFT objectives.

Candidate filtering should precede cost evaluation. Nodes with an incompatible model, adapter, tokenizer, privacy domain, or stale catalogue epoch are removed. The remaining set can be restricted to the nearest few nodes per useful profile, because a farther node with the same profile and no smaller advertised delay is dominated for that request. This reduces online complexity from network-wide search to a small candidate list.

This online decision must be backed by explicit admission, fallback, and failure-handling rules.
The fixed-point analysis assumes stable queues, but a production system must actively enforce stability. A profile tier can reject or redirect arrivals when its utilization exceeds an admission threshold $\rho_s^{\rm adm}<1$. The router then evaluates the next-best candidate rather than waiting for instability. To avoid oscillation, the threshold can use hysteresis: a tier closes at $\rho_s^{\rm high}$ and reopens at $\rho_s^{\rm low}<\rho_s^{\rm high}$.

A non-specialized fallback profile is operationally valuable even when it is not optimal under light load. It provides four protections: it preserves high runtime concurrency, serves requests with no useful prefix match, absorbs bursts directed toward a specialized cache, and remains available during cache migration or catalogue inconsistency. This motivates the explicit constraint $\pi_0\ge \pi_{\min}$ discussed after~\eqref{eq:projection}. The fallback fraction should increase with workload burstiness, telemetry delay, and uncertainty in prefix popularity.

Cache misses and stale metadata do not threaten model correctness if every reusable state is verified by model identifier, tokenizer identifier, adapter version, prefix hash, position encoding, and cache epoch before use. A failed verification simply converts the request to a smaller hit or a full prefill. The same fallback is used when a node failure removes an advertised profile. Thus, spatial prefix caching can degrade gracefully: stale information reduces acceleration but need not change the generated result.

\noindent\textbf{State lifecycle and calibration.}
The profile-construction process in Section~V can be implemented with a radix-tree catalogue. For each vertex $v$, the serving platform records request count, descendant count, reusable depth, state bytes, measured prefill saving, and a time-decayed popularity estimate. A practical value score is
\begin{equation}
\widetilde\Gamma_v(t)=
\frac{\widehat p_v(t)\,\widehat V_v^{\rm pf}}
{B_v+\lambda_{\rm mig}C_v^{\rm mig}+\lambda_{\rm risk}C_v^{\rm var}},
\label{eq:lifecyclevalue}
\end{equation}
where $C_v^{\rm mig}$ is migration cost and $C_v^{\rm var}$ penalizes rapidly varying popularity. The parameters $\lambda_{\rm mig}$ and $\lambda_{\rm risk}$ prevent the controller from repeatedly moving large states for short-lived demand spikes.

The cache lifecycle contains four actions. \emph{Admission} decides whether a newly popular subtree should become persistent. \emph{Replication} increases its spatial density when communication distance or tier load is excessive. \emph{Demotion} shortens or removes a profile when runtime concurrency becomes scarce. \emph{Migration} relocates a state when demand shifts geographically. These actions operate at the intermediate time scale and can be rate limited by a migration budget
\begin{equation}
\sum_{s,t}m_{s\rightarrow t}C_{s\rightarrow t}^{\rm mig}\le C_{\rm epoch}^{\rm mig},
\end{equation}
where $m_{s\rightarrow t}$ is the number or density of nodes changing from profile $s$ to $t$ in one control epoch.

The resulting profile catalogue is useful only when its analytical quantities are tied to the serving stack and workload trace.
The analytical quantities are deliberately modular so that normalized parameters can be replaced by measurements. Table~\ref{tab:calibration} summarizes a reproducible calibration workflow. KV bytes per token should be measured from the actual model architecture, numerical precision, and serving engine rather than inferred only from nominal layer dimensions, because block allocation and metadata add overhead. Prefill function $g(n)$ should be profiled over residual length, batch size, and chunk size. Runtime concurrency should be measured under the same admission policy used in deployment. Finally, the queue model should be validated against trace-driven continuous batching; if Erlang-C is inaccurate, a learned delay map $\widehat W_s(\lambda,\boldsymbol\omega,c_s)$ can replace it without changing the spatial association derivation.

\begin{table}[t]
\caption{Mapping from Analytical Quantities to Measurable Inputs}
\label{tab:calibration}
\centering
\scriptsize
\begin{tabular}{p{1.45cm}p{2.1cm}p{3.35cm}}
\toprule
Quantity & Measurement source & Recommended procedure \\
\midrule
$B_s$ & KV allocator & Record persistent bytes for each ancestor-closed profile, including block and metadata overhead. \\
$g(n)$ & GPU profiler & Sweep residual prompt length, batch size, and chunked-prefill configuration; fit or interpolate. \\
$c_s$ & Admission controller & Measure maximum stable active sequences after reserving profile $s$. \\
$D_q$ & Serving trace & Measure decode occupancy and other execution time by request class. \\
$p_q,H_{s,q}$ & Prompt trace & Tokenize with the deployed tokenizer and construct a time-decayed radix forest. \\
$\xi,\alpha$ & Radio telemetry & Fit transfer delay versus distance, bandwidth allocation, and radio condition. \\
$F_{W_s}$ & Queue telemetry & Compare Erlang-C with discrete-event or production traces and substitute a calibrated tail model if required. \\
\bottomrule
\end{tabular}
\end{table}

To support reproducibility, an empirical paper should report the model and tokenizer, GPU type, numerical precision, serving engine, block size, batching and chunking policy, prompt-trace source, cache lifetime, catalogue-update period, and all fitting errors. Reporting only the cache-hit ratio is insufficient because two deployments with the same hit ratio can have different hit depths, concurrency losses, and queue distributions.

\noindent\textbf{Design implications.}
The analysis leads to several implementation rules. First, replicate shallow common prefixes broadly because they serve many descendants at modest memory cost. Second, deploy deep specialized prefixes selectively and monitor their area service capacity; a deep tier with insufficient density becomes both a radio-distance and queueing bottleneck. Third, reserve fallback capacity instead of allocating all memory to persistent states. Fourth, route by marginal end-to-end latency saving rather than longest-prefix-first. Fifth, include telemetry age and uncertainty in the routing cost. Sixth, update profile placement more slowly than request routing so that the network does not chase transient popularity noise. Finally, plan edge density and profile diversity jointly: additional GPUs cannot compensate for a poor profile mix, and deeper caching cannot compensate for insufficient aggregate service capacity.

\subsection{Extensions Beyond the Baseline Model}\label{sec:extensions}
The baseline deliberately uses an equivalent distance-dependent communication cost, local persistent states, stationary Poisson traffic, and independent profile marks. These assumptions make the central coupling tractable, but the derived architecture only requires candidate-specific communication, reuse, and delay estimates. The following extensions therefore replace individual modules rather than the overall formulation.

\noindent\textbf{Wireless and disaggregated state access.}
The delay-equivalent model in~\eqref{eq:commdelay} isolates the cache--compute tradeoff, but a radio-specific study can replace it by
\begin{equation}
T^{\rm com}_{x,q}=
\frac{U_q}{W^{\rm u}_{x,q}\log_2(1+\mathrm{SINR}^{\rm u}_{x,q})}
+\frac{F_q}{W^{\rm d}_{x,q}\log_2(1+\mathrm{SINR}^{\rm d}_{x,q})},
\label{eq:sinrdelay}
\end{equation}
where $U_q$ is the uploaded prompt payload and $F_q$ is the response payload required before the first token is delivered. Under Rayleigh fading and stationary interferers, conditional success probabilities can be evaluated using the PPP probability-generating functional. The principal complication is that association then depends on random fading and shared bandwidth in addition to distance. Two tractable approaches are possible. A long-term association policy can use mean spectral efficiency conditioned on distance, after which the present derivation applies with a fitted $T^{\rm com}(r)$. Alternatively, an instantaneous policy can condition on the marked channel state and average the selection event over fading, yielding a higher-dimensional but conceptually identical guard-region integral.

Bandwidth sharing introduces another load coupling because the communication delay depends on the number of radio users associated with a node. This can be incorporated by augmenting the fixed point with an average radio load $n_s^{\rm r}$ and replacing $\xi$ by $\xi_s(n_s^{\rm r})$. The resulting controller balances two queues: a radio-transfer queue and a GPU-execution queue. Such a model is especially relevant when prompts or multimodal inputs are large; for short text prompts, GPU queueing and prefill often dominate.

The same marginal-latency principle extends to disaggregated KV retrieval and state transfer.
The baseline assumes that a selected node locally stores the useful prefix state. In a disaggregated architecture, the compute node may fetch KV blocks from a memory pool or another edge node. The realized latency becomes
\begin{equation}
T_{x,q}=T_{x,q}^{\rm com}+T_{x,q}^{\rm fetch}+W_x+T_{x,q}^{\rm pf},
\end{equation}
where
\begin{equation}
T_{x,q}^{\rm fetch}=\frac{B_{x,q}^{\rm KV}}{R_{x}^{\rm bh}}+T_{x,q}^{\rm decmp}+T_x^{\rm cont}
\end{equation}
accounts for state size, backhaul rate, decompression, and interference with foreground inference. Mooncake, MemServe, KVDirect, CacheGen, and ShadowServe demonstrate that these terms can be comparable to the computation saved~\cite{qin2024mooncake,hu2024memserve,chen2024kvdirect,liu2023cachegen,xiang2025shadowserve}. The spatial framework still applies if each candidate is marked by a local or remote-state retrieval mode. A remote prefix is useful only when its prefill saving exceeds the additional radio, backhaul, decompression, and queueing cost.

A richer placement problem may separate compute profiles from memory profiles. Compute nodes and KV-memory nodes can be modeled as two point processes, and a request selects a compute--memory pair. This produces a bipartite spatial matching problem with two distances and two load processes. The one-process model in this paper is a tractable first step and a baseline for evaluating whether disaggregation is worth its networking cost.

\noindent\textbf{Serving dynamics and coordinated placement.}
Production LLM serving uses continuous batching, chunked prefill, prefill/decode disaggregation, and request migration~\cite{agrawal2024sarathi,zhong2024distserve,patel2023splitwise,sun2024llumnix}. Service times are not exponential, arrivals may be bursty, and jobs interact through dynamic batching. Erlang-C is therefore an analytically convenient approximation rather than a claim of exact server behavior.

The spatial results require only a profile-specific advertised delay distribution, not the Markov property itself. One replacement is an $M/G/c$ or phase-type approximation fitted to measured service times. Another is a trace-driven simulator that returns $F_{W_s}(w\mid \lambda_s^{\rm q},\boldsymbol\omega_s,c_s)$. A differentiable surrogate can then be inserted into the fixed point and outer placement solver. For bursty arrivals, a safety factor based on the squared coefficient of variation or a percentile queue predictor can enlarge the robust margin $\psi_x$ in~\eqref{eq:practicalcost}.

Beyond short-term queue dynamics, profile popularity and user geography also evolve across control epochs.
Independent marking represents randomized or decentralized profile assignment. A coordinated controller could deliberately repel identical deep profiles to improve geographic diversity. Poisson-hole, determinantal, or perturbed-lattice models can represent such deployments, although independent void probabilities are lost. The PPP result remains a useful baseline for irregular deployments, node failures, and randomized reconfiguration.

Prompt popularity and user geography vary over time. A dynamic formulation can introduce profile-transition variables and migration cost, producing a multi-period optimization
\begin{align}
\max_{\{\boldsymbol\pi(t)\}}\quad &\sum_t P_{\rm cov}^{\rm TTFT}(\tau_0;t)
-\lambda_{\rm mig}\mathcal C(\boldsymbol\pi(t),\boldsymbol\pi(t-1))
\end{align}
subject to per-epoch stability and migration budgets. Mobility can be handled by predicting the distribution of the user's future serving region during the request lifetime. This is most relevant for long agent sessions and multi-turn conversations, where preserving a warm state at the current node may compete with migrating it toward the user's next region.

\noindent\textbf{Heterogeneity, privacy, and validation.}
An edge network may host several model sizes, quantization levels, adapters, or tenant-specific privacy domains. Model identity can be appended to the cache profile mark. A request then competes only among compatible nodes, while Theorem~1 applies to the reduced candidate set. When a prefix is reusable across adapters only up to a shared base-model layer, the hit function can be decomposed into reusable and adapter-specific components.

Privacy does not require centralizing raw prompt text. Placement can operate on aggregated prefix hashes, depths, popularity counts, state sizes, and lifetimes. However, rare prefix identifiers can still leak information through frequency or membership inference. A practical system should aggregate statistics over privacy domains, apply minimum-support thresholds, and avoid globally advertising low-frequency prefixes. Encryption protects transferred state, but it does not eliminate side channels from catalogue membership or access patterns.

These extensions ultimately require a hardware- and trace-calibrated validation path.
The numerical results validate the analytical machinery and reveal structural mechanisms; they do not claim hardware-specific latency gains. A publication-grade empirical extension should complete four levels of validation. First, verify the prefix forest and overlap-depth distribution on public or application-specific traces. Second, measure $g(n)$, $B_s$, $c_s$, and $F_{W_s}$ using a production serving engine. Third, replay the workload through a multi-node discrete-event simulator with realistic radio and backhaul traces. Fourth, validate selected operating points on a physical testbed and report TTFT percentiles, throughput, memory fragmentation, cache migration traffic, and catalogue staleness.

The principal modeling risks are also clear. Exact prefix reuse may be sparse in some workloads; state lifetime may be shorter than migration time; radio transfer may dominate for multimodal requests; and cache affinity may be correlated across neighboring nodes. These cases reduce the quantitative gain but do not invalidate the analytical question. The framework is most valuable when reusable prompt regions are long, prefill is a material fraction of TTFT, and multiple edge nodes offer meaningfully different cache states and loads.

\endgroup
\section{Conclusion}
This paper introduced a stochastic-geometry and queueing framework for spatial prefix caching in wireless edge LLM networks. A prefix forest and ancestor-closed cache profiles capture nested prompt reuse, while independent spatial marking yields tractable profile-specific PPPs. The derived association and TTFT-coverage expressions jointly account for communication distance, partial prefill reuse, queueing, and GPU-memory-limited concurrency.

The framework changes the central caching question. Classical wireless caching asks whether requested content is available nearby. Spatial prefix caching asks how much computation can be reused nearby, whether that reuse is worth its communication and queueing cost, and how much runtime capacity must be sacrificed to preserve it. The analysis shows that the nearest edge GPU need not minimize TTFT, that additional persistent caching can hurt latency by reducing concurrency, and that queue-aware routing is essential to prevent cache-affinity hotspots. Numerical validation further demonstrates that the optimal profile mix depends on offered load and that edge density alone cannot replace cache diversity.

The resulting design principle is hierarchical: replicate shallow prefixes broadly, deploy deep prefixes selectively, preserve high-concurrency fallback capacity, and route according to marginal end-to-end latency saving. These results provide a foundation for edge-GPU density planning, profile placement, and communication--computation-aware association in future wireless edge intelligence systems. A hardware-calibrated extension using real prompt traces and production serving engines is the next step toward deployment-level validation.

\appendices
\begingroup
\footnotesize
\setlength{\abovedisplayskip}{4pt plus 1pt minus 1pt}
\setlength{\belowdisplayskip}{4pt plus 1pt minus 1pt}
\setlength{\abovedisplayshortskip}{2pt}
\setlength{\belowdisplayshortskip}{2pt}
\section{Detailed Proof of Theorem 1}
We prove the profile-association probability in a constructive way. Consider a typical request of type $q$ at the origin and fix a candidate profile $s$ with $\lambda_s>0$. Let $R_s$ denote the distance to the nearest profile-$s$ node. By the nearest-neighbor law of a PPP,
\begin{equation}
 f_{R_s}(r)=2\pi\lambda_s r\exp(-\pi\lambda_s r^2),\qquad r\ge 0.
\end{equation}
Condition on the event $R_s=r$. Under this conditioning, the nearest profile-$s$ node lies on the circle of radius $r$, and there is no other profile-$s$ node in the disk $\mathcal B(0,r)$. The conditioned profile-$s$ candidate is selected if and only if its predicted cost does not exceed the cost of the nearest node in every competing profile $t\neq s$, i.e.,
\begin{equation}
\xi r^\alpha+\Delta_{s,q}\le \xi R_t^\alpha+\Delta_{t,q},\qquad \forall t\neq s.
\label{eq:app_assoc_start}
\end{equation}
Rearranging~\eqref{eq:app_assoc_start} yields
\begin{equation}
R_t^\alpha\ge r^\alpha+\frac{\Delta_{s,q}-\Delta_{t,q}}{\xi}.
\end{equation}
Since distances are nonnegative, the corresponding guard radius for profile $t$ is
\begin{equation}
G_{s,t,q}(r)=\left[r^\alpha+\frac{\Delta_{s,q}-\Delta_{t,q}}{\xi}\right]_+^{1/\alpha}.
\end{equation}
Therefore, conditioned on $R_s=r$, profile $s$ wins if every competing tier $t$ has no point inside the disk $\mathcal B(0,G_{s,t,q}(r))$. By independent marking, the profile-specific processes are mutually independent PPPs, so the conditional selection probability factors as
\begin{align}
&\Prb\big[s_q^*=s\mid R_s=r\big] \\
&\quad=\prod_{t\neq s}\Prb\big[\Phi_t(\mathcal B(0,G_{s,t,q}(r)))=0\big].
\end{align}
Using the void probability of a PPP,
\begin{equation}
\Prb\big[\Phi_t(\mathcal B(0,u))=0\big]=\exp(-\pi\lambda_tu^2),
\end{equation}
we obtain
\begin{equation}
\Prb\big[s_q^*=s\mid R_s=r\big]=\prod_{t\neq s}\exp\big(-\pi\lambda_tG_{s,t,q}^2(r)\big).
\end{equation}
Multiplying this conditional probability by the density of $R_s$ and integrating over all feasible $r$ gives
\begin{align}
A_{s,q}
&=\int_0^\infty f_{R_s}(r)\Prb\big[s_q^*=s\mid R_s=r\big]dr \\
&=\int_0^\infty 2\pi\lambda_sr\exp\!\left[-\pi\lambda_sr^2-\pi\sum_{t\neq s}\lambda_tG_{s,t,q}^2(r)\right]dr,
\end{align}
which is exactly~\eqref{eq:assocprob}. Since the events $\{s_q^*=s\}$ are mutually exclusive and exhaustive over the active profiles, summing $A_{s,q}$ over $s$ gives one.\hfill$\square$

\section{Detailed Proof of TTFT Coverage}
We derive the TTFT-coverage expression by conditioning on request type, selected profile, and serving distance. Consider request type $q$ and profile $s$. From~\eqref{eq:fsel}, the joint density that the selected node belongs to profile $s$ and lies at distance $r$ is $f_{s,q}^{\rm sel}(r)$. Conditional on this event, the realized TTFT is
\begin{equation}
T_{s,q}=\xi r^\alpha+W_s+T_{s,q}^{\rm pf}.
\end{equation}
For a latency target $\tau$, the success event is
\begin{equation}
T_{s,q}\le \tau
\iff
W_s\le \tau-\xi r^\alpha-T_{s,q}^{\rm pf}.
\label{eq:app_success}
\end{equation}
The right-hand side of~\eqref{eq:app_success} must be nonnegative. This leads to the distance truncation
\begin{equation}
r\le r_{s,q}^{\max}(\tau)=\left[\frac{\tau-T_{s,q}^{\rm pf}}{\xi}\right]_+^{1/\alpha}.
\end{equation}
Whenever $r\le r_{s,q}^{\max}(\tau)$, the conditional success probability equals the waiting-time CDF,
\begin{equation}
\Prb(T_{s,q}\le \tau\mid s_q^*=s,R_s=r)=F_{W_s}(\tau-\xi r^\alpha-T_{s,q}^{\rm pf}).
\end{equation}
Therefore, averaging over distance and then over profile and request type gives
\begin{align}
P_{\rm cov}^{\rm TTFT}(\tau)
&=\sum_q p_q \sum_s \int_0^{r_{s,q}^{\max}(\tau)} f_{s,q}^{\rm sel}(r)
F_{W_s}(\tau-\xi r^\alpha-T_{s,q}^{\rm pf})dr,
\end{align}
which is~\eqref{eq:ttftcoverage}.\hfill$\square$

\section{Erlang-C Waiting-Time Distribution}
Consider a stable $M/M/c$ queue with arrival rate $\lambda$, exponential service rate $\mu$ per server, and offered load $a=\lambda/\mu$. The birth--death chain has stationary probabilities
\begin{equation}
p_n=\begin{cases}
p_0a^n/n!,&0\le n<c,\\
p_0a^n/(c!c^{n-c}),&n\ge c,
\end{cases}
\end{equation}
where normalization gives
\begin{equation}
p_0^{-1}=\sum_{n=0}^{c-1}\frac{a^n}{n!}+\frac{a^c}{c!(1-\rho)},\qquad \rho=\frac{\lambda}{c\mu}<1.
\end{equation}
An arrival waits precisely when it observes at least $c$ jobs. By PASTA, the waiting probability is the stationary tail probability,
\begin{equation}
P^{\rm wait}=\sum_{n=c}^{\infty}p_n
=\frac{\frac{a^c}{c!(1-\rho)}}{\sum_{n=0}^{c-1}\frac{a^n}{n!}+\frac{a^c}{c!(1-\rho)}},
\end{equation}
which is~\eqref{eq:erlangc}. Conditional on waiting, all servers are busy. Owing to exponential service and memorylessness, the workload ahead of the tagged job decreases with net rate $c\mu-\lambda$, so the conditional waiting time is exponential with this rate. Therefore,
\begin{equation}
\Prb(W>w)=P^{\rm wait}\exp[-(c\mu-\lambda)w],\qquad w\ge0.
\end{equation}
The distribution has mass $1-P^{\rm wait}$ at zero, and integration of the tail gives $\E[W]=P^{\rm wait}/(c\mu-\lambda)$, proving~\eqref{eq:meanwait} and~\eqref{eq:waitcdf}.

\section{Reproducibility Note}
The numerical evaluation follows the fixed-point, quadrature, and Monte Carlo procedures specified in Sections~IV--VI. The accompanying source package contains the implementation used to regenerate every analytical and simulation figure. Hardware-specific coefficients can be substituted without changing the spatial derivations.

\endgroup
\balance
\bibliographystyle{IEEEtran}
\bibliography{references}
\end{document}